%
\input ./style/arxiv-vmsta.cfg
\documentclass[numbers,compress,v1.0.1]{vmsta}
\usepackage{vtexbibtags}

\volume{5}
\issue{2}
\pubyear{2018}
\firstpage{145}
\lastpage{165}
\aid{VMSTA100}
\doi{10.15559/18-VMSTA100}
\articletype{research-article}



\ifdefined\HCode 
\def\newmathds#1{\mathbb{#1}}
\else 
\def\newmathds#1{\mathds{#1}}
\fi

\startlocaldefs
\newcommand{\rrVert}{\Vert}
\newcommand{\llVert}{\Vert}
\newcommand{\rrvert}{\vert}
\newcommand{\llvert}{\vert}
\urlstyle{rm}
\allowdisplaybreaks
\endlocaldefs

\usepackage{dsfont,mathbh}
\usepackage{authorquery}

\newtheorem{theorem}{Theorem}[section]
\newtheorem{lemma}[theorem]{Lemma}
\newtheorem{proposition}[theorem]{Proposition}

\newtheorem{definition}[theorem]{Definition}
\newtheorem{remark}[theorem]{Remark}

\begin{aqf}
\querytext{Q1}{Check if square integrability is really needed for a
continuously differentiable function.}
\querytext{Q2}{Something seems to be missing here.}
\end{aqf}
\begin{document}

\begin{frontmatter}
\pretitle{Research Article}

\title{Computation of option greeks under hybrid stochastic volatility
models via Malliavin calculus}

\author{\inits{B.}\fnms{Bilgi}~\snm{Yilmaz}\ead[label=e1]{ybilgi@metu.edu.tr}}

\address{METU, Institute of Applied Mathematics,
\institution{Middle East Technical University}, 6800~Ankara, \cny{Turkey}}

%



\markboth{B. Yilmaz}{Computation of option greeks under hybrid
stochastic volatility models via Malliavin calculus}

\begin{abstract}
This study introduces computation of option sensitivities (Greeks) using
the Malliavin calculus under the assumption that the underlying asset and
interest rate both evolve from a stochastic volatility model and a
stochastic interest rate model, respectively. Therefore, it integrates the
recent developments in the Malliavin calculus for the computation of Greeks:
Delta, Vega, and Rho and it extends the method slightly. The main results
show that Malliavin calculus allows a running Monte Carlo (MC)
algorithm to
present numerical implementations and to illustrate its effectiveness. The
main advantage of this method is that once the algorithms are
constructed, they
can be used for numerous types of option, even if their payoff functions
are not differentiable.
\end{abstract}
\begin{keywords}
\kwd{Malliavin calculus}
\kwd{Bismut--Elworthy--Li formula}
\kwd{computation of greeks}
\kwd{hybrid stochastic volatility models}
\end{keywords}
%

\received{\sday{2} \smonth{5} \syear{2017}}
\revised{\sday{26} \smonth{2} \syear{2018}}
\accepted{\sday{30} \smonth{3} \syear{2018}}
\publishedonline{\sday{24} \smonth{4} \syear{2018}}
\end{frontmatter}

\section{Introduction}
\label{intro}

In finance, pricing an option is the main issue of managing a trade. However,
once an option is settled, its price does not remain constant. Instead, it
follows a dynamic path during its survival time. Therefore, the market
participants should protect themselves against the unexpected price changes
by managing the variations in the option price.

The price risk and its management is always inseparably associated with
the Greeks, which are derivatives of an option price with respect to
its certain underlying parameters. The information that the Greeks
contain is used to measure the unexpected option price changes based on
specific risk factors. From this point of view, the Greeks are commonly
used to construct a replicating portfolio to protect the main portfolio
against some of the possible changes related to the certain risk
factors. Thus, the computation of Greeks is more important than
obtaining the price of an option~\cite{glasserman2013}, and it becomes
a fundamental research area in mathematical finance. Here, it is worth
to emphasize that the change in the underlying asset price is a very
significant risk factor that affects the option price virtually. Hence,
among other Greeks, Delta, which measures the change in the option
price for a unit change in the price of underlying asset price, is an
essential indicator to determine the balance of underlying asset and
options which is a hedging ratio.

Over the past two decades, there has been an increasing attention
focused on the accurate pricing of hybrid products that are based on a
combination of underlying assets from different classes~\cite
{Grzela2012}. In light of the recent developments in the hybrid
products, it has become difficult to ignore the computation of Greeks
from which the underlying asset evolves. Thus, the study considers an
extension of stochastic volatility models, namely Hybrid Stochastic
Volatility (HSV) models. Within these models, the interest rate evolves
from a stochastic process rather than from a constant or a
deterministic function to have a more flexible multi-factor stochastic
volatility (SV) model. However, the computation of Greeks under these
assumptions becomes complicated. It is because these suggested models
do not have an explicit distribution or because in most cases the
option payoffs are not differentiable. Therefore, the likelihood method
and the pathwise method are not appropriate ones in computation of
Greeks for this kind of underlying assets. On the other hand, one may
use the finite difference method since it relies on the approximation
of a derivative as the change in a dependent variable over a small
interval of the independent variable, and it can be written using a
small set of difference operators. However, it is computationally
expensive and problematic for discontinuous option payoff functions. In
this respect, the computation of Greeks in hybrid products is not
straightforward as in the Black--Scholes model (BSM).

After the pioneering studies~\cite{Fournie1999,Fournie2001}, the
integration by parts formula in the context of the Malliavin calculus
has come to be considered as one of the main tools in the computation
of Greeks. Since then, a considerable amount of literature has been
published in this research area including some numerical applications.
In most of these studies, the market dynamics are assumed to follow the
BSM assumptions. However, recently, there has been an increasing
interest in the computation of Greeks under the assumptions of SV
models~\cite{Benhamou2002,Ewald2006,Mhlanga} and jump-diffusion type
models~\cite{morel2006,Davis2006,privault2004}.

The theory of the Malliavin calculus is attractive among both
theoreticians and practitioners for three main reasons. First, it
allows researchers to derive explicit weights to design an efficient MC
algorithm. Second, it requires neither any differentiability condition
on payoff functions (unlike the pathwise method) nor probability
density functions of underlying assets (unlike the likelihood method).
Third, it is more flexible than other methods in the sense that
different perturbations of the Brownian motion yield different weights,
which leads to an a priori interest in picking weights with small
variances. Hence, it may be used for all types of option, whether they
have a continuous or discontinuous payoff function.

The driving goal of this study is to introduce an expression for the
computation of Greeks via Malliavin Calculus under the HSV model
dynamics, and then, to use an MC algorithm for the expected values and
their differentials that correspond to the Greeks. The study shows
their accuracy by comparing the findings with the findings of finite
difference method in all variations. The novel part of this study is to
provide applicable formulas using the technical and theoretical results
in the context of the Malliavin calculus. To derive general formulas
for the Greeks, first, an HSV model is defined in such a way that it
allows researchers to obtain generalized Greeks for all types of hybrid
models using fundamental tools in the Malliavin calculus, namely
integration by parts and the Bismut--Elworthy--Li formulas on the
Gaussian space. The main results reveal that the Malliavin calculus
gives the Greeks as a product of an option payoff function and an
independent weight function called the Malliavin weight. Moreover, the
numerical illustrations reveal that the computation cost for the
Malliavin calculus method is less than the computation cost for the
finite difference method in all variations.

This article consists of 4 sections. In Section~\ref{sec:MCgreeks}, it
introduces a general formula for the HSV models and the perturbed price
processes. Then, it presents the main results which extend the existing
theoretical Greeks formulas. In Section~\ref{sec:num}, the formulas are
derived for the Heston stochastic volatility model with a stochastic
interest rate, namely, the Vasicek model. Also in this section,
numerical illustrations are presented for Delta, Rho, and Vega of a
European call option to verify the efficiency and to compare findings
with the outcomes of the finite difference method in all variations.
Finally, Section~\ref{sec:conclusion} is the conclusion.
\section{Computation of the generalized greeks}
\label{sec:MCgreeks}
Consider a fixed filtered probability space $ (\varOmega, \mathcal
{F},\mathcal{F}_{t\in [0,T ]},\mathbb{Q} )$ which is rich
enough to accommodate a Brownian motion of dimension three for the
computation convenience. Note that in this representation $\mathcal
{F}_{t}$ is a filtration generated by three independent standard
Brownian motions $W_{t}^i$ for $i=1,2,3$ and $\mathbb{Q}$ is the
risk-neutral probability measure.

Throughout the study, it is assumed that the underlying asset price
evolves from the SDE system
\begin{align}
dS_t&=r_{t}S_{t}\;dt+S_{t} \,
\sigma (V_t )dZ_{t}^{1} \label
{eq:general1},
\\
dV_t&=u (V_t )dt+v (V_t )
dZ_{t}^2 \label
{eq:general2},
\\
dr_t&=f (r_t )dt+g (r_t )
dZ_{t}^3 \label{eq:general3},
\end{align}
where $(Z_{t})_{t\in [0, T ]}^i$'s are correlated Brownian
motions with correlation coefficients $\rho_{ij} \in(-1, 1)$ for
$i,j=1,2,3$. These correlated Brownian motions may also be represented
by a combination of three independent Brownian motions $(W_{t})_{t\in
 [0, T ]}^i$ such as
\begin{align*}
dZ_{t}^1 &= dW_{t}^1,
\\
dZ_{t}^2 &= \rho_{12}\;dW_{t}^{1}+
\mu_1 \;dW_{t}^{2},
\\
dZ_{t}^3 &= \rho_{13} dW_{t}^1
+ \mu_2 dW_{t}^2 + \mu_3
dW_{t}^{3},
\end{align*}
where the parameters are\vadjust{\goodbreak}
\begin{align*}
\mu_1&=\sqrt{1-\rho_{12}^2}, \qquad
\mu_2= \frac{\rho_{23}-\rho_{12} \rho
_{13}}{\mu_1},\\
\mu_3&= \frac{\sqrt{1-\rho_{12}^2 - \rho_{13}^2 -
\rho_{23}^2 + 2\rho_{13}\rho_{12} \rho_{23}}}{ \mu_1}.
\end{align*}
Throughout the study, it is assumed that the correlation coefficients
$\rho_{ij}$ are chosen in such a way that $\mu_3$ is a real number. The
solutions $S_t$, $V_t$, and $r_t$ represent an underlying asset,
volatility and interest rate processes with initial values $S_0, V_0$,
and $r_0$, respectively. Here, it is assumed that $\sigma, u,v, f$, and
$g$ are continuously differentiable functions with bounded derivatives
of order at least two. Moreover, $\sigma$, $v$, and $g$ are assumed to
be adapted and not equal to zero everywhere in the domain.

Now, suppose the SDEs given in equations (\ref{eq:general1}), (\ref
{eq:general2}), and (\ref{eq:general3}) can be merged into a
three-dimensional SDE system and it is represented as
\begin{align}
\label{eq:multiSDE} dX_t&=\beta (X_t )dt+a (X_t
)d\mathbb{W}_t,
\\
X_0&=x,
\nonumber
\end{align}
where,
\begin{eqnarray*}
X_t= %
\begin{bmatrix}
S_t \\[0.3em]
V_t \\[0.3em]
r_t
\end{bmatrix} %
,\quad \beta
(X_t )= %
\begin{bmatrix}
r_tS_t \\[0.3em]
u (V_t ) \\[0.3em]
f (r_t )
\end{bmatrix} %
,
\end{eqnarray*}
and
\begin{align*}
a (X_t )&= %
\begin{bmatrix}
S_t\sigma (V_t ) & 0 & 0 \\[0.3em]
\rho_{12} v (V_t ) & \mu_1 v (V_t ) & 0 \\[0.3em]
\rho_{13} g (r_t ) & \mu_2 \; g (r_t ) & \mu_3 g
(r_t )
\end{bmatrix} %
,\\
x&= %
\begin{bmatrix}
S_0 \\[0.3em]
V_0 \\[0.3em]
r_0
\end{bmatrix} %
,\qquad \mathbb{W}_t=
\begin{bmatrix}
W_t^1 \\[0.3em]
W_t^2 \\[0.3em]
W_t^3
\end{bmatrix} %
.
\end{align*}
Notice that $X_t$ is a Markov process and $(\mathbb{W}_t)_{t\in[0, T]}$
is a three-dimensional standard Brownian motion.

In this representation, it is convenient to assume that $\beta$ and $a$
both are at least twice continuously differentiable functions with
bounded derivatives and adapted for the sake of computations. Moreover,
to ensure the existence of a unique strong solution to equation (\ref
{eq:multiSDE}), it is assumed that both $\beta$ and $a$ satisfy the
Lipschitz and polynomial growth conditions. Under these assumptions,
the study also guarantees that $X_t$ is a Markov process, the
trajectories of this solution are almost surely continuously
differentiable for all $t$ up to explosion time~\cite{protter2005}.

Furthermore, in order to have an appropriate solution, it should also
be guaranteed that the diffusion matrix $a$ satisfies the uniform
ellipticity condition~\cite{Mhlanga},
\begin{equation}
\label{eqn:ellipticity} \bigl(a(x)\zeta\bigr)^\top\bigl(a(x)\zeta\bigr)\geq
\epsilon \llVert \zeta \rrVert ^2,\quad \text{for any}\; x,\zeta\in
\mathbb{R}^3.
\end{equation}

Here, it is worth to emphasize that if the conditions on $\mu_{i,j}$
and the
functions $\sigma, v, g$ are satisfied, the diffusion matrix $a$
satisfies the
uniform ellipticity condition.\vadjust{\goodbreak} This implies that $a$ is a positive definite
matrix, and its eigenvalues are greater than a small positive integer
$\epsilon\in\mathbb{R}$. Hence, $a$ is an invertible matrix, and its inverse
is bounded. This condition also implies that for any bounded function
$\gamma:[0,T]\times\mathbb{R}^3\mapsto\mathbb{R}^3$, $a^{-1}\gamma$ is
bounded, and further, $(a^{-1}\gamma)(X_t)$ lies in the Hilbert space
$L^2([0,T]\times\varOmega)$~\cite{Fournie1999}.

The first variation process of $(X_t)_{t\in[0, T]}$, which is the
derivative of $(X_t)_{t\in[0, T]}$ with respect to its initial value
$x$, plays an important role in the computation of Greeks. Hence, the
study reminds the definition of the first variation process.

\begin{definition}[First variation process]
\label{def:firstvar}
Let $X_t$ be a process given by equation (\ref{eq:multiSDE}). Then, the
first variation of this process is defined by
\begin{align}
\label{eq:firstvariation} dY_t&= \beta^\prime (X_t
)Y_tdt+\sum_{i=1}^{3}{a_{i}^\prime
(X_t )Y_t \, d W_t^{i}},
\\
Y_0&=\mathbh{1}_{3\times3}
\nonumber
,
\end{align}
where $\beta^\prime$ and $a_i^\prime$ are the Jacobian of $\beta$ and
$i$th column vector of matrix $a$ with respect to $x$, respectively.
Here, $\mathbh{1}_{3\times3}$ is the identity matrix of $\mathbb{R}^3$,
and $Y_t=D^x X_t$. Note that $\beta$ and $a$ are assumed to be at least
twice continuously differentiable functions with bounded derivatives.
Moreover, the diffusion matrix $a$ satisfies the uniform ellipticity
condition (\ref{eqn:ellipticity}), and $X_t$ has continuous trajectories.
\end{definition}
Now, it is the time to introduce the following lemma for the integrity
of the study.
\begin{lemma}
If $(Y_t Y_s^{-1} a) \in L^2([0,T]\times\varOmega)$ for all $s,t \in
[0,T]$, then $X_t$ is Malliavin differentiable and the Malliavin
derivative of $X_t$ can be written as follows
\begin{equation}
\label{eq:malderX} D_sX_t=Y_tY_s^{-1}a
(X_s )\mathbh{1}_{s\leq t}, \quad s\geq0, \; a.s.
\end{equation}
\end{lemma}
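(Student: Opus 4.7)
The plan is to apply the Malliavin derivative operator $D_s$ directly to the integral form of equation~(\ref{eq:multiSDE}) and identify the resulting linear SDE with the one defining $Y_t Y_s^{-1} a(X_s)$, then invoke uniqueness. First I would recall from standard Malliavin calculus (e.g.\ Nualart) that, because $\beta$ and $a$ are twice continuously differentiable with bounded derivatives and satisfy Lipschitz and polynomial growth, the solution $X_t$ belongs to the Malliavin Sobolev space $\mathbb{D}^{1,2}$ componentwise; the hypothesis $Y_t Y_s^{-1} a(X_s) \in L^2([0,T]\times\varOmega)$ is precisely what guarantees that the candidate derivative process is square integrable.

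Next, for $s \leq t$, I would write
\begin{equation*}
X_t = x + \int_0^t \beta(X_u)\,du + \sum_{i=1}^3 \int_0^t a_i(X_u)\,dW_u^i,
\end{equation*}
and apply $D_s$ term by term, using the commutation rule $D_s(\int_0^t a_i(X_u)\,dW_u^i) = a_i(X_s)\mathbh{1}_{s\leq t} + \int_s^t a_i'(X_u) D_s X_u \,dW_u^i$ together with the chain rule on the Lebesgue integral. This yields, for $t\geq s$,
\begin{equation*}
D_s X_t = a(X_s) + \int_s^t \beta'(X_u) D_s X_u\,du + \sum_{i=1}^3 \int_s^t a_i'(X_u) D_s X_u\,dW_u^i,
\end{equation*}
that is, $D_s X_t$ solves the \emph{linear} SDE governed by the same Jacobians $\beta'(X_u)$ and $a_i'(X_u)$ that drive the first variation equation~(\ref{eq:firstvariation}), with initial condition $a(X_s)$ at time $s$. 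For $s>t$ the adaptedness of $X_t$ gives $D_s X_t = 0$, explaining the factor $\mathbh{1}_{s\leq t}$.

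Then I would compare with the first variation process. By Definition~\ref{def:firstvar} and the flow property (differentiating with respect to the starting point at time $s$ instead of $0$), the process $u\mapsto Y_u Y_s^{-1}$ satisfies the same homogeneous linear SDE with initial condition $\mathbh{1}_{3\times 3}$ at $u=s$. Multiplying on the right by the $\mathcal{F}_s$-measurable matrix $a(X_s)$ preserves the equation and shifts the initial condition to $a(X_s)$, so $u\mapsto Y_u Y_s^{-1} a(X_s)$ solves exactly the same linear SDE as $u\mapsto D_s X_u$ on $[s,T]$. Because $\beta'$ and $a_i'$ are bounded, standard Lipschitz uniqueness for linear SDEs with coefficients $\beta'(X_u)$, $a_i'(X_u)$ applied in $L^2(\varOmega)$ forces the two processes to coincide, giving~(\ref{eq:malderX}).

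The delicate point is not the formal computation but the justification of the commutation between $D_s$ and the stochastic integral; this requires the integrands $a_i(X_u)$ to lie in $\mathbb{D}^{1,2}$ as processes and their derivatives to be in $L^2([0,T]^2\times\varOmega)$. This is precisely the role of the standing assumption $Y_t Y_s^{-1} a(X_s)\in L^2([0,T]\times\varOmega)$ together with the $C^2_b$ regularity of $\sigma,u,v,f,g$: one verifies by a Picard iteration (approximating $X_t$ and showing $D_s X_t^{(n)}$ is Cauchy in the Malliavin norm) that the commutation is legal and that the limit is genuinely the Malliavin derivative. I expect this closure/limit argument to be the main technical obstacle; once it is in place, the uniqueness step is routine.
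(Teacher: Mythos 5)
The paper states this lemma without proof; it is imported as a known result from the Malliavin calculus literature (Theorem 2.2.1 in~\cite{nualart2006malliavin}, and the analogous statement in~\cite{Fournie1999}), so there is no in-paper argument to compare against. Your proposal reconstructs exactly the standard proof that those references give: commute $D_s$ with the integral form of the SDE to obtain the linear equation $D_sX_t=a(X_s)+\int_s^t\beta'(X_u)D_sX_u\,du+\sum_{i}\int_s^t a_i'(X_u)D_sX_u\,dW_u^i$ for $t\geq s$, observe that $u\mapsto Y_uY_s^{-1}a(X_s)$ solves the same linear equation with the same initial condition because right-multiplication by the $\mathcal{F}_s$-measurable matrix $Y_s^{-1}a(X_s)$ preserves the homogeneous first-variation dynamics, and conclude by pathwise uniqueness; adaptedness gives the factor $\mathbh{1}_{s\leq t}$. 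This is correct and is the intended argument. Two points are worth tightening. First, you use the a.s.\ invertibility of $Y_s$, which is not part of Definition~\ref{def:firstvar}; it should be justified by noting that $Y_t^{-1}$ itself solves a linear SDE with bounded coefficients (or, here, by the explicit triangular form of $Y_t$ in Proposition~\ref{prop:firstvariation}, whose diagonal entries are exponentials). Second, your claim that the hypothesis $Y_tY_s^{-1}a\in L^2([0,T]\times\varOmega)$ is ``precisely'' what licenses the commutation is backwards: under the standing $C^2_b$, Lipschitz and growth assumptions, $X_t\in\mathbb{D}^{1,2}$ and the commutation identity already follow from the Picard-iteration argument you sketch; the $L^2$ hypothesis serves rather to ensure that the right-hand side of the asserted formula is a legitimate element of $L^2([0,T]\times\varOmega)$ so that the identification in that space makes sense. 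Neither point affects the validity of the proof.
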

One can calculate the components of the first variation process as in
the following proposition.
\begin{proposition}
\label{prop:firstvariation}
Let $X_t$ and the first variation process $Y_t$ for $t \in[0,T]$ be
defined by equations (\ref{eq:multiSDE}) and (\ref{eq:firstvariation}),
respectively. Then, $Y_{t}^{ij} = 0 $ and $Y_t^{23}=0$ a.s. if $i > j $
for $i,j=1,2,3$. Moreover, $Y_t^{11}, Y_{t}^{22}$, and $Y_{t}^{33}$
have the following solutions for $t \in[0,T]$
\begin{align}
Y_{t}^{11}&= \exp \Biggl( \int_0^t
{ \biggl(r_s -\frac{1}{2}\sigma^2
(V_s ) \biggr)} \; ds + \int_0^t {
\sigma (V_s ) } \, dZ_{s}^1 \Biggr),
\label{eq:Yt11}
\\
Y_{t}^{22} &= \exp \Biggl( \int_0^t
{ \biggl(u^{\prime} (V_s ) -\frac{1}{2}v^{\prime}
(V_s )^2 \biggr)} \; ds + \int_0^t
{ v^{\prime
} (V_s )} dZ_{s}^2 \Biggr),
\\
Y_{t}^{33} &= \exp \Biggl( \int_0^t
{ \biggl(f^{\prime} (r_s ) -\frac{1}{2}g^{\prime}
(r_s )^2 \biggr)} \; ds + \int_0^t
{ g^{\prime
} (r_s )} dZ_{s}^3 \Biggr).
\end{align}
Furthermore, $Y_t^{12}$ and $Y_{t}^{13}$ satisfy
\begin{align*}
dY_t^{12}&= r_t Y_t^{12}
dt + \bigl[ \sigma (V_t ) Y_t^{12} +
S_t \, \sigma^{\prime} (V_t ) Y_t^{22}
\bigr] dW_t^{1},
\\
dY_t^{13}&= \bigl[ r_t Y_t^{13}
+ S_t Y_t^{33} \bigr] dt + \sigma
(V_t ) Y_t^{13} dW_t^{1},
\end{align*}
with the initial values $Y_0^{12}=Y_0^{13}=0$.
\end{proposition}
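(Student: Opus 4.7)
The strategy is to exploit the strong upper-triangular structure of the coefficient matrices of (\ref{eq:firstvariation}) and to reduce the $3\times 3$ matrix SDE to a sequence of scalar SDEs that can be solved essentially by inspection.

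First, I would compute the Jacobians $\beta'(X_t)$ and $a_i'(X_t)$ explicitly by differentiating the rows of $\beta$ and each column $a_i$ of $a$. The key observation is that all four matrices are upper triangular, with additional sparsity: $a_2'$ has a zero first row, and $a_3'$ is nonzero only at position $(3,3)$. The only above-diagonal entries that survive are $(1,3)$ in $\beta'$ (equal to $S_t$) and $(1,2)$ in $a_1'$ (equal to $S_t\sigma'(V_t)$); this structural fact is what forces the stated pattern of zeros in $Y_t$.

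Second, because $Y_0=\mathbh{1}_{3\times 3}$ is upper triangular and the product of upper-triangular matrices is upper triangular, the matrix SDE preserves the upper-triangular subspace. More rigorously, for each pair $(i,j)$ with $i>j$ the $(i,j)$-entry of (\ref{eq:firstvariation}) closes into a linear homogeneous SDE for the lower-triangular block alone; with zero initial data, strong uniqueness of linear SDEs forces $Y_t^{ij}=0$ a.s. Exactly the same argument applied to $Y_t^{23}$, whose SDE after removing lower-triangular entries reads $dY_t^{23}=u'(V_t)Y_t^{23}dt+v'(V_t)Y_t^{23}(\rho_{12}dW_t^1+\mu_1 dW_t^2)$, yields $Y_t^{23}\equiv 0$.

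Third, once the unwanted entries are known to vanish, extracting the $(i,i)$-component of (\ref{eq:firstvariation}) produces a decoupled geometric-Brownian-motion type SDE for each diagonal entry, with time-dependent but $Y$-independent coefficients. The diffusion contributions from $a_1',a_2',a_3'$ recombine into the single correlated Brownian increments $dZ_t^2$ and $dZ_t^3$ defined at the start of the paper; the identities $\rho_{12}^2+\mu_1^2=1$ and $\rho_{13}^2+\mu_2^2+\mu_3^2=1$, which follow directly from the definitions of $\mu_1,\mu_2,\mu_3$, certify that these processes have unit quadratic variation, and It\^o's formula then yields the three exponential expressions for $Y_t^{11},Y_t^{22},Y_t^{33}$. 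For $Y_t^{12}$ and $Y_t^{13}$ I would simply read off the corresponding entries of the matrix SDE, substituting the vanishing entries from the previous step; the $a_2'$ and $a_3'$ contributions drop out because their first rows are zero, and the remaining terms are exactly the two linear SDEs stated in the proposition, with initial condition $0$ inherited from $Y_0=\mathbh{1}_{3\times 3}$. There is no real conceptual obstacle: the proof is essentially careful bookkeeping of nine matrix products, with the only subtle point being the verification of the quadratic variations of $Z^2$ and $Z^3$ that collapses the multi-Brownian diffusion sums into a single correlated driver.
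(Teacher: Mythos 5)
Your proposal is correct and follows essentially the same route as the paper: compute $\beta'$ and the column Jacobians $a_i'$, insert them into the matrix SDE (\ref{eq:firstvariation}), use the triangular sparsity plus uniqueness for linear SDEs with zero initial data to kill $Y_t^{ij}$ ($i>j$) and $Y_t^{23}$, and then read off the decoupled scalar equations for the remaining entries. If anything, you are more explicit than the paper on the two points it glosses over, namely the uniqueness argument behind the vanishing entries and the quadratic-variation identities $\rho_{12}^2+\mu_1^2=1$ and $\rho_{13}^2+\mu_2^2+\mu_3^2=1$ that let the diffusion terms recombine into $dZ^2_t$ and $dZ^3_t$.
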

\begin{proof}
In order to find the entries of the system of SDEs which is satisfied
by $Y$, one has first to calculate the Jacobian of $\beta$ and the
$i$th column vector, $a_i$, of the diffusion matrix $a$ for
$i=1,2,3$. These are
\begin{align*}
\beta^\prime (x_1, x_2, x_3 )&=
\begin{bmatrix}
x_3 & 0 & x_1 \\[0.3em]
0 & u^{\prime}(x_2) & 0 \\[0.3em]
0 & 0 & f^{\prime}  (x_3 )
\end{bmatrix} %
,
\\
a_{1}^\prime (x_1, x_2,
x_3 )&= %
\begin{bmatrix}
\sigma (x_2 ) & x_1 \sigma^{\prime}(x_2) & 0 \\[0.3em]
0 & \rho_{12} v^{\prime}(x_2) & 0 \\[0.3em]
0 & 0 & \rho_{13} g^{\prime}(x_3)
\end{bmatrix} %
,\\
a_{2}^\prime (x_1, x_2,
x_3 )&= %
\begin{bmatrix}
0 & 0 & 0 \\[0.3em]
0 & \mu_1 v^{\prime}(x_2) & 0 \\[0.3em]
0 & 0 & \mu_2 g^{\prime} (x_3)
\end{bmatrix} %
,
\\
a_{3}^\prime (x_1, x_2,
x_3 )&= %
\begin{bmatrix}
0 & 0 & 0 \\[0.3em]
0 & 0 & 0 \\[0.3em]
0 & 0 & \mu_3 g^{\prime} (x_3 )
\end{bmatrix} %
.
\end{align*}
By inserting these equations into equation (\ref{eq:firstvariation}),
one obtains the first variation process as
\begin{align*}
& %
\begin{bmatrix}
dY_{t}^{11} & dY_{t}^{12} & dY_{t}^{13} \\[0.3em]
dY_{t}^{21} & dY_{t}^{22} & dY_{t}^{23} \\[0.3em]
dY_{t}^{31} & dY_{t}^{32} & dY_{t}^{33}
\end{bmatrix}\\ %
&\quad= %
\begin{bmatrix}
r_t Y_{t}^{11} + S_t Y_{t}^{31} & r_t Y_{t}^{12} + S_t Y_{t}^{32} &
r_t Y_{t}^{13} + S_t Y_{t}^{33} \\[0.3em]
u^{\prime} (V_t) Y_{t}^{21} & u^{\prime} (V_t) Y_{t}^{22} & u^{\prime
}(V_t) Y_{t}^{23} \\[0.3em]
f^{\prime} (r_t) Y_{t}^{31} & f^{\prime} (r_t) Y_{t}^{32} & f^{\prime}
(r_t) Y_{t}^{33}
\end{bmatrix} %
dt
\\
&\quad +  \small{ \left[%
\begin{array}{@{}c@{\ }c@{\ }c@{}}
\sigma(V_t) Y_{t}^{11} + S_t \sigma^{\prime} (V_t) Y_{t}^{21} & \sigma
(V_t) Y_{t}^{12} + S_t \sigma^{\prime} (V_t) Y_{t}^{22}
& \sigma(V_t) Y_{t}^{13} + S_t \sigma^{\prime} (V_t) Y_{t}^{23} \\[0.3em]
\rho_{12} v^{\prime} (V_t) Y_{t}^{21} & \rho_{12} v^{\prime} (V_t)
Y_{t}^{22} & \rho_{12} v^{\prime}(V_t) Y_{t}^{23} \\[0.3em]
\rho_{13} g^{\prime} (r_t) Y_{t}^{31} & \rho_{13} g^{\prime} (r_t)
Y_{t}^{32} & \rho_{13} g^{\prime} (r_t) Y_{t}^{33}
\end{array}\right] %
dW_t^1
}
\\
&\quad +  %
\begin{bmatrix}
0 & 0
& 0 \\[0.3em]
\mu_1 v^{\prime} (V_t) Y_{t}^{21} & \mu_{1} v^{\prime} (V_t)
Y_{t}^{22} & \mu_1 v^{\prime}(V_t) Y_{t}^{23} \\[0.3em]
\mu_2 g^{\prime} (r_t) Y_{t}^{31} & \mu_2 g^{\prime} (r_t) Y_{t}^{32}
& \mu_2 g^{\prime} (r_t) Y_{t}^{33}
\end{bmatrix} %
dW_t^2
\\
&\quad +  %
\begin{bmatrix}
0 & 0
& 0 \\[0.3em]
0 & 0 & 0 \\[0.3em]
\mu_3 g^{\prime} (r_t) Y_{t}^{31} & \mu_3 g^{\prime} (r_t) Y_{t}^{32}
& \mu_3 g^{\prime} (r_t) Y_{t}^{33}
\end{bmatrix} %
dW_t^3,
\end{align*}
with an initial value of the identity matrix in $\mathbb{R}^3$.

Then, one can easily deduce that $Y_{t}^{ij} = 0 $ and $Y_t^{23}=0$
a.s. if $i > j $ for $i,j=1,2,3$ by applying the It\^o calculus.
Moreover, for $t \in[0,T]$
\begin{align*}
dY_{t}^{11}&= r_t Y_t^{11}
\; dt + \sigma (V_t ) Y_t^{11}
dZ_{t}^1,
\\
dY_{t}^{22}&= u^{\prime} (V_t )
Y_t^{22}dt + v^{\prime} (V_t )
Y_t^{22} dZ_{t}^2,
\\
dY_{t}^{33}&= f^{\prime} (r_t )
Y_t^{33} dt + g^{\prime} (r_t )
Y_t^{33} dZ_{t}^3,
\end{align*}
with the initial values $Y_0^{11}=Y_0^{22}=Y_0^{33}=1$.
\end{proof}
\begin{remark}
\label{rmrk:delta}
As an immediate conclusion of equation~(\ref{eq:Yt11}), $Y_t^{11}$ may
be written in terms of the asset price, i.e.
\begin{equation*}
Y_t^{11}= \frac{1}{S_0} S_t,\quad a.s.
\end{equation*}
%
\end{remark}
For the computation purposes, it is convenient to consider a continuous
time trading economy with a finite time horizon $t\in[0,T]$. Suppose
that the uncertainty in this economy is idealized by a fixed filtered
probability space $ (\varOmega, \mathcal{F},\mathcal{F}_{t\in
[0,T ]}, \mathbb{Q} )$ and the information evolves according
to the filtration $(\mathcal{F}_t)_{t\in[0,T]}$ generated by $(\mathbb
{W}_t)_{t\in[0, T]}$. Furthermore, consider an option in this market,
which has a square integrable payoff function denoted by $\varPhi=\varPhi
(X_{t_1},\ldots, X_{t_n} )$, where $\varPhi$ is a payoff function
that is continuously differentiable with bounded derivatives. In other
words, the option is replicable since $\newmathds{E}[\varPhi
(X_{t_1},\ldots, X_{t_n} )^2]<\infty$~\cite{etheridge2002}.

Now, following studies~\cite{Davis2006,Ewald2006}, the option price,
$p$, at time $t=0$ with a maturity $T<\infty$ is traditionally
calculated as the expected value of the discounted payoff at maturity
conditionally to the present information, which is described by the
$\sigma$-algebra $\mathcal{F}_0$. Now, it is possible to define the
price process in the following definition.
\begin{definition}
\label{price}
Price, $p(x)$, of an option with an underlying $X$ and a payoff
function $\varPhi$, is defined by
\begin{equation}
\label{eq:valuefunction} p(x)=\newmathds{E} \bigl[e^{-\int_{0}^{T}{r_tdt}}\varPhi (X_{t_1},
\ldots, X_{t_n} )|\mathcal{F}_0 \bigr],
\end{equation}
where, $0=t_1,\ldots,t_n=T$ is a partition of the finite time horizon
$[0, T]$, and $p(x)$ denotes today's fair price of the option. Namely,
in this study $\newmathds{E}[.]$ is the expected value under the
risk-neutral probability.
\end{definition}
Note that the objective of the computation of Greeks is to
differentiate the price, $p$, of an option with respect to model parameters.
\begin{remark}
In the computation of Greeks, it is fundamentally assumed that the
option payoff function, $\varPhi$, is continuously differentiable with
bounded derivatives. However, they are generally not smooth in genuine
markets. In this case, if the law of the underlying asset is absolutely
continuous, and the option payoff function, $\varPhi$, is Lipschitz, it is
possible to derive explicit Malliavin weights for the Greeks by
Proposition~\ref{prp:genchain}.
\end{remark}
In the computation of Greeks via the Malliavin calculus, a weight
function, which is independent of the option payoff function, is
obtained. To obtain a valid computation result, one has to guarantee
that the Malliavin weights do not degenerate with probability one.
Hence,~\cite{Fournie1999} demonstrates the set of square integrable functions
\begin{equation*}
\varGamma_n= \Biggl\{\alpha\in L^2 \bigl( [0, T ]
\bigr);\int_0^{t_i}{\alpha(t)dt}=1, \forall i=1,
\ldots,n \Biggr\},
\end{equation*}
in $\mathbb{R}$, to avoid the degeneracy. Here, $t_i$'s are the given
time values in $ [0, T ]$.

Now, it is the time to remind readers about the celebrated
Bismut--Elworthy--Li formula introduced in~\cite
{bismut1984large,elworthy1994formulae}.
%
\begin{proposition}
\label{prp:origibismut}
Suppose that the functions $\beta$ and $a$ in equation (\ref
{eq:multiSDE}) are continuously differentiable with bounded
derivatives, and the diffusion matrix $a$ satisfies the uniform
ellipticity condition~(\ref{eqn:ellipticity}). Moreover, the option
payoff function, $\varPhi$, is square integrable, and continuously
differentiable with bounded derivatives. Now, consider the price
process given in Definition~\ref{price} with a maturity $T<\infty$, then
\begin{equation}
\label{eqn:bissmut} \bigl(\nabla p (x ) \bigr)^{\top}=\newmathds{E}
\bigl[e^{-\int
_{0}^{T}{r_tdt}}\varPhi (X_{t_1},\ldots,X_{t_n} )\pi
\bigr],
\end{equation}
where $\nabla$, and $\pi$ denote the gradient and Malliavin weight,
respectively. Here,
\begin{equation}
\pi=\int_{0}^{T}{\alpha(t) \bigl(a^{-1}
(X_t )Y_t \bigr)^{\top
} d\mathbb{W}_t},
\end{equation}
where $\alpha\in\varGamma_n$, and $Y_t$ is the first variation process.
\end{proposition}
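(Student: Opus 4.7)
The overall plan is to prove the identity in three conceptual steps. First, differentiate $p(x)$ under the expectation sign to express $\nabla p(x)$ in terms of $\nabla\varPhi$ and the first variation process $Y_t$. Second, recast the same quantity as a time integral of the Malliavin derivative $D_s\varPhi$ paired with a specific adapted kernel built from $a^{-1}(X_s)Y_s$ and $\alpha\in\varGamma_n$. Third, apply the Malliavin integration-by-parts (Skorohod--Malliavin duality) to transfer the derivative from $\varPhi$ onto the kernel, thereby producing the weight $\pi$.

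For the first step, the $C^{1}$-with-bounded-derivatives hypotheses on $\beta$, $a$ and $\varPhi$, together with the polynomial growth of $(X_t,Y_t)$ guaranteed by the Lipschitz and growth assumptions on the coefficients, justify differentiating under the expectation by dominated convergence. Applying the chain rule and recalling $Y_t=D^{x}X_t$ from Definition~\ref{def:firstvar} yields
\begin{equation*}
\bigl(\nabla p(x)\bigr)^{\top}=\newmathds{E}\Bigl[e^{-\int_{0}^{T}r_{t}dt}\sum_{i=1}^{n}Y_{t_i}^{\top}\bigl(\nabla_{i}\varPhi(X_{t_1},\ldots,X_{t_n})\bigr)^{\top}\Bigr],
\end{equation*}
where the $x$-dependence of the discount factor is absorbed into the payoff by augmenting the state with $\int_{0}^{t}r_{s}ds$ and applying what follows to the extended diffusion (ellipticity and smoothness are preserved under this augmentation).

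For the second step, the Malliavin chain rule combined with the representation $D_sX_t=Y_tY_s^{-1}a(X_s)\mathbh{1}_{s\le t}$ from the preceding lemma gives
\begin{equation*}
D_s\varPhi(X_{t_1},\ldots,X_{t_n})=\sum_{i=1}^{n}\nabla_{i}\varPhi(X_{t_1},\ldots,X_{t_n})\,Y_{t_i}Y_{s}^{-1}a(X_s)\,\mathbh{1}_{s\le t_i}.
\end{equation*}
Right-multiplying by $\alpha(s)\bigl(a^{-1}(X_s)Y_s\bigr)$, the matrix product $Y_{s}^{-1}a(X_s)\cdot a^{-1}(X_s)Y_s$ collapses to the identity; integrating over $s\in[0,T]$ and invoking the defining property $\int_{0}^{t_i}\alpha(s)ds=1$ of $\varGamma_n$ recovers exactly $\sum_{i}\nabla_{i}\varPhi\cdot Y_{t_i}$, i.e.\ the integrand of the first step.

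Finally, uniform ellipticity of $a$ makes $a^{-1}$ bounded, so the kernel $\alpha(s)a^{-1}(X_s)Y_s$ lies in $L^{2}([0,T]\times\varOmega)$ and is adapted; its Skorohod integral coincides with the It\^o integral defining $\pi$. The Malliavin duality formula $\newmathds{E}\!\int_{0}^{T}\langle D_sF,u_s\rangle\,ds=\newmathds{E}[F\,\delta(u)]$, applied with $F=e^{-\int_{0}^{T}r_tdt}\varPhi$ (extended payoff) and with $u$ equal to the kernel columns from the previous step, then identifies the left-hand side with $\newmathds{E}[e^{-\int_{0}^{T}r_tdt}\varPhi\,\pi]$, which is the claimed identity. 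I expect the main obstacle to be the multidimensional bookkeeping in the second step --- keeping transposes and row/column conventions consistent and verifying that $Y_tY_s^{-1}a(X_s)$ genuinely lies in $L^{2}([0,T]\times\varOmega)$ so that the chain rule and the duality formula truly apply --- together with the care needed to treat the discount factor rigorously via the state-augmentation trick.
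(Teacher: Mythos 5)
The paper does not actually prove Proposition~\ref{prp:origibismut}: it recalls it as the Bismut--Elworthy--Li formula and defers to \cite{bismut1984large,elworthy1994formulae,Fournie1999}. Your three-step argument --- differentiate under the expectation using $Y_t=D^xX_t$; rewrite $\sum_i\nabla_i\varPhi\,Y_{t_i}$ as $\int_0^T D_s\varPhi\cdot\alpha(s)a^{-1}(X_s)Y_s\,ds$ via $D_sX_t=Y_tY_s^{-1}a(X_s)$ and the normalization $\int_0^{t_i}\alpha(s)\,ds=1$; then apply the duality $\newmathds{E}[\int_0^T D_sF\,u_s\,ds]=\newmathds{E}[F\delta(u)]$ with $\delta(u)$ reducing to an It\^o integral for adapted $u$ --- is exactly the classical proof of Fourni\'e et al., so in outline you are reconstructing the intended argument rather than diverging from it.

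The genuine gap is your treatment of the discount factor. Since $r_t$ solves an autonomous SDE, $e^{-\int_0^T r_t\,dt}$ does not depend on $S_0$ or $V_0$, so for those two components of the gradient your argument goes through (modulo the integrability of $Y_tY_s^{-1}a(X_s)$, which you rightly flag). But it does depend on $r_0$, and your fix --- augmenting the state with $I_t=\int_0^t r_s\,ds$ and asserting that ``ellipticity and smoothness are preserved'' --- is false: the added coordinate carries no driving noise, so the augmented diffusion matrix is degenerate and not invertible, which is precisely what the kernel $a^{-1}(X_s)Y_s$ requires. Even bypassing the augmentation and applying duality directly to $F=e^{-I_T}\varPhi$, the Malliavin derivative acquires the extra term $-e^{-I_T}\varPhi\,D_sI_T$ with $D_sI_T=\int_s^T D_sr_t\,dt$; pairing this against $\alpha(s)a^{-1}(X_s)Y_s$ and integrating in $s$ would reproduce the $r_0$-derivative of the discount, namely $-\int_0^T Y_t^{33}\,dt$, only if $\int_0^t\alpha(s)\,ds=1$ for almost every $t\in[0,T]$, which no $\alpha\in L^2([0,T])$ satisfies. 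So the third row of identity (\ref{eqn:bissmut}) needs an explicit correction term or a separate argument, and your proof as written does not supply it; this is also the point at which the paper's own statement is silently incomplete.
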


\subsection{Computation of Delta}
Delta of an option is a measure of variations in its price with respect
to initial underlying asset price, and it determines a hedging ratio.
It can be computed by using Proposition~\ref{prp:Malliavindelta}.
\begin{proposition}
\label{prp:Malliavindelta}
Suppose the functions $\beta$ and $a$ are both as in equation (\ref
{eq:multiSDE}). Moreover, they are continuously differentiable
functions with bounded derivatives and the diffusion matrix $a$
satisfies the uniform ellipticity condition (\ref{eqn:ellipticity}).
Now, consider an option with payoff $\varPhi$, which is a continuously
differentiable function with bounded derivatives. Then, Delta of the
option with the price function (\ref{eq:valuefunction}) is
\begin{equation}
\label{eq:delta123} \varDelta=\newmathds{E} \bigl[\varPhi (X_{t_1},\ldots,
X_{t_n} ) \varDelta _{\mathit{MW}} \bigr],
\end{equation}
where $\varDelta_{\mathit{MW}} $ is the Maliavin weight of Delta and it is
\begin{align*}
\varDelta_{\mathit{MW}}&= \frac{e^{-\int_{0}^{T}{r_tdt}}}{ S_0 T} \Biggl( \int_{0}^{T}{
\frac{1}{ \sigma (V_t )}dW_t^1}- \frac{\rho_{12}} {\mu_1} \int
_{0}^{T}{\frac{1}{\sigma (V_t
)} dW_t^2}
\\
&\quad+ \frac{ \rho_{12} \mu_2 - \rho_{13} \mu_1}{\mu_1 \mu_3
} \int_{0}^{T}{
\frac{1} {\sigma (V_t ) } dW_{t}^3} \Biggr).
\end{align*}
\end{proposition}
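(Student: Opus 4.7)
The plan is to apply the Bismut--Elworthy--Li formula (Proposition~\ref{prp:origibismut}) and then reduce the abstract Malliavin weight $\pi$ to its first component, since Delta is the derivative of $p(x)$ with respect to the first coordinate of $x$, namely $S_0$. Concretely, I take $n=1$, $t_1=T$, and $\alpha(t)=1/T \in \varGamma_1$, which satisfies $\int_0^T \alpha(t)\,dt=1$. Writing $\pi=(\pi_1,\pi_2,\pi_3)^\top$, Delta is obtained by reading off
\[
\varDelta=\frac{\partial p}{\partial S_0}=\newmathds{E}\bigl[e^{-\int_0^T r_t\,dt}\varPhi(X_T)\,\pi_1\bigr],
\]
so the whole task reduces to computing $\pi_1=\int_0^T \tfrac{1}{T}\,\bigl[(a^{-1}(X_t)Y_t)^\top\bigr]_{1\bullet}\,d\mathbb{W}_t$, i.e.\ integrating the first column of $a^{-1}(X_t)Y_t$ against $d\mathbb{W}_t$.

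The first step in the reduction is to invert the diffusion matrix $a(X_t)$. Because $a$ is lower triangular with nonzero diagonal entries $S_t\sigma(V_t)$, $\mu_1 v(V_t)$, $\mu_3 g(r_t)$ (guaranteed nonzero by the non-degeneracy assumptions on $\sigma,v,g$ and by the uniform ellipticity condition~(\ref{eqn:ellipticity})), $a^{-1}(X_t)$ is also lower triangular and can be written down by direct forward substitution. The first column of $a^{-1}(X_t)$ comes out, after elementary algebra, as
\[
\Bigl(\tfrac{1}{S_t\sigma(V_t)},\; -\tfrac{\rho_{12}}{\mu_1 S_t\sigma(V_t)},\; \tfrac{\rho_{12}\mu_2-\mu_1\rho_{13}}{\mu_1\mu_3 S_t\sigma(V_t)}\Bigr)^\top.
\]

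The second step uses the structural information from Proposition~\ref{prop:firstvariation}: $Y_t$ is upper triangular with $Y_t^{23}=0$, so the first column of $Y_t$ is $(Y_t^{11},0,0)^\top$. Consequently the first column of $a^{-1}(X_t)Y_t$ equals $Y_t^{11}$ times the first column of $a^{-1}(X_t)$ computed above. Invoking Remark~\ref{rmrk:delta} to replace $Y_t^{11}$ by $S_t/S_0$ cancels the $S_t$ in each denominator and produces the coefficients $1/(S_0\sigma(V_t))$, $-\rho_{12}/(\mu_1 S_0\sigma(V_t))$, and $(\rho_{12}\mu_2-\mu_1\rho_{13})/(\mu_1\mu_3 S_0\sigma(V_t))$ in front of $dW_t^1$, $dW_t^2$, and $dW_t^3$ respectively. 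Pulling the constant $1/(S_0T)$ and the discount factor out of the stochastic integrals yields exactly the stated expression for $\varDelta_{\mathit{MW}}$.

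The integrability required to apply the Bismut--Elworthy--Li formula is inherited from the standing assumptions: uniform ellipticity bounds $1/\sigma(V_t)$, the coefficients of $Y_t^{11}$ give $L^2$ estimates via standard SDE moment bounds, and $\alpha(t)=1/T$ is trivially in $L^2([0,T])$, so $(a^{-1}Y)^\top$ integrated against $d\mathbb{W}$ is a genuine Itô integral of a square-integrable integrand. The only slightly subtle point — the one I expect to require the most care rather than any deep new argument — is the bookkeeping of the transpose in $\pi=\int_0^T\alpha(t)(a^{-1}(X_t)Y_t)^\top d\mathbb{W}_t$, where one must be sure that ``first component of $\pi$'' corresponds to integrating the \emph{first column} (not first row) of $a^{-1}Y$ against $d\mathbb{W}_t$; matching this against differentiation in $S_0$ is what singles out exactly the column that the triangular structure of $Y_t$ makes computable in closed form.
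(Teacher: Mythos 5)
Your proposal is correct and follows essentially the same route as the paper: invert the lower-triangular diffusion matrix $a$, apply Proposition~\ref{prp:origibismut} with $\alpha(t)=1/T$, read off the first component of $\pi$ (the first column of $a^{-1}Y$, which by the triangular structure of $Y_t$ is $Y_t^{11}$ times the first column of $a^{-1}$), and finish with Remark~\ref{rmrk:delta} to cancel $S_t$. The only cosmetic difference is that the paper writes out the full matrix $(a^{-1}(X_t)Y_t)^{\top}$ before extracting the relevant row, whereas you compute only the column actually needed.
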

\begin{proof}
Using the inverse of the diffusion matrix $a$
\begin{equation*}
a^{-1} (X_t )= %
\begin{bmatrix}
\frac{1}{S_t\sigma (V_t )} & 0 & 0 \\[0.3em]
\frac{-\rho_{12}}{\mu_1 S_t \sigma (V_t )} & \frac{1}{\mu_1
v (V_t )} & 0 \\[0.3em]
\frac{\rho_{12}\mu_2 - \rho_{13} \mu_1} {\mu_1 \mu_3 S_t \sigma(V_t)}
& \frac{-\mu_2}{\mu_1 \mu_3 v (V_t )}& \frac{1}{\mu_3 g
(r_t )}
\end{bmatrix}
,
\end{equation*}
one obtains
\begin{equation*}
 \bigl(a^{-1} (X_t )Y_t
\bigr)^{\top}= %
\begin{bmatrix}
\frac{Y_t^{11}}{S_t\sigma (V_t )} & \frac{-\rho_{12}
Y_t^{11}}{\mu_1 S_t\sigma (V_t )} & \frac{ (\rho_{12} \mu
_2 - \rho_{13} \mu_1 )Y_t^{11}}{ \mu_1 \mu_3 \sigma (V_t
) S_t } \\[0.3em]
\frac{Y_t^{12}}{S_t\sigma (V_t )} & \frac{-\rho
_{12}Y_t^{12}}{\mu_1S_t\sigma(V_t)}+\frac{Y_t^{22}}{\mu_1 v
(V_t )}&\frac{(\rho_{12}\mu_2-\rho_{13}\mu_1)Y_t^{12}}{S_t\sigma
(V_t)\mu_1\mu_3} -\frac{ \mu_2 Y_t^{22}}{ \mu_1 \mu_3 v (V_t
)} \\[0.3em]
\frac{Y_t^{13}}{S_t\sigma (V_t )} & \frac{-\rho_{12}
Y_t^{13}}{\mu_1S_t\sigma(V_t)} & \frac{(\rho_{12}\mu_2-\rho_{13}\mu
_1)Y_t^{13}}{\mu_1\mu_3S_t\sigma(V_t)}+\frac{Y_t^{33}}{ \mu_3 g
(r_t )}
\end{bmatrix} %
.
\end{equation*}
An immediate conclusion of Proposition~\ref{prp:origibismut} with the
special choice of $\alpha(t)=\frac{1}{T}$ is given by 
%
\begin{align*}
\varDelta_{\mathit{MW}}&= \frac{e^{-\int_{0}^{T}{r_tdt}}}{T} \Biggl(\int_{0}^{T}{
\frac{Y_t^{11}}{S_t\sigma (V_t
)}dW_t^1}-\int_{0}^{T}{
\frac{\rho_{12} Y_t^{11}}{ \mu_1 S_t\sigma
(V_t )}dW_t^2}
\\
&\quad+\int_{0}^{T}{ \frac{(\rho_{12} \mu_2 - \rho
_{13} \mu_1) Y_t^{11} } {\mu_1 \mu_3 S_t \sigma (V_t )}
}dW_t^3 \Biggr).
\end{align*}
Here one may choose $\alpha(t)=\frac{1}{T}$ since European options are
priced at maturity and therefore $t_i=T$. Then, with Remark~\ref
{rmrk:delta}, the final result may be obtained easily.
\end{proof}
As it is seen in the above formula, the gradient of the option price is
denoted by $\nabla p(x)=(\frac{\partial p}{\partial S_0}, \frac
{\partial p}{\partial V_0}, \frac{\partial p}{\partial r_0})^{\top}$,
where ${\top}$ denotes the transpose. The first row of the solution
corresponds to the option's Delta. The remaining two rows correspond to
the changes in the price with respect to the initial volatility and
initial interest rate, respectively. Hence, as a consequence of this
result, one can present the following two remarks.
%
\begin{remark}[$\mathit{Vega}^{V_t}$]
The sensitivity of an option to its initial volatility is
\begin{equation*}
\mathit{Vega}^{V_t}= \newmathds{E} \bigl[\varPhi (X_{t_1},\ldots,
X_{t_n} )\mathit{Vega}^{V_t}_{\mathit{MW}} \bigr],
\end{equation*}
where
\begin{align*}
\mathit{Vega}^{V_t}_{\mathit{MW}}&= e^{-\int_{0}^{T}{r_tdt}} \Biggl(\int
_0^T{\frac
{Y_t^{12}}{S_t\sigma (V_t )}dW_t^1}+
\int_0^T{\frac{-\rho
_{12}Y_t^{12}}{\mu_1S_t\sigma(V_t)}+
\frac{Y_t^{22}}{\mu_1 v
(V_t )}dW_t^2}
\\
&\quad+\int_0^T{\frac{(\rho_{12}\mu_2-\rho_{13}\mu
_1)Y_t^{12}}{\mu_1\mu_3\sigma(V_t)}-
\frac{ \mu_2 Y_t^{22}}{ \mu_1 \mu_3
v (V_t )}dW_t^3} \Biggr).
\end{align*}
\end{remark}
%
%
\begin{remark}[$\mathit{Rho}^{r_t}$]
The sensitivity of an option to the initial interest rate is
\begin{equation*}
\mathit{Rho}^{r_t}=\newmathds{E} \bigl[\varPhi (X_{t_1},\ldots,
X_{t_n} )\mathit{Rho}^{r_t}_{\mathit{MW}} \bigr],
\end{equation*}
where
\begin{align*}
\mathit{Rho}^{r_t}_{\mathit{MW}}&=e^{-\int_{0}^{T}{r_t}dt} \Biggl(\int
_0^T{\frac
{Y_t^{13}}{S_t\sigma (V_t )}dW_t^1}-
\int_0^T{\frac{\rho_{12}
Y_t^{13}}{\mu_1S_t\sigma(V_t)}dW_t^2}
\\
&\quad+\int_0^T{\frac{(\rho_{12}\mu_2-\rho_{13}\mu_1)Y_t^{13}}{\mu_1\mu
_3S_t\sigma(V_t)}+
\frac{Y_t^{33}}{ \mu_3 g (r_t )}dW_t^3} \Biggr).
\end{align*}
\end{remark}
%
\subsection{Computation of Rho}
\label{sbs:variationsDrift}
The computation of Rho is not as straightforward as is the computation
of Delta since the interest rate is neither constant nor deterministic.
Hence, instead of directly differentiating the option price with
respect to the interest rate, one may consider adding a perturbation
term $\epsilon$ to the drift term and then try to observe the effect of
the perturbation on the option. Here, it is necessary to clarify what
exactly is meant by a perturbed process $X_t^\epsilon$ to observe the
change in the price with respect to change in the drift term.\vadjust{\goodbreak}

As in~\cite{Fournie1999,Davis2006}, the study introduces the perturbed
process $ (X_t^\epsilon )_{t\in[0, T]}$ as follows:
\begin{equation}
\label{eq:perturb1} dX_t^\epsilon= \bigl(\beta
\bigl(X_t^\epsilon \bigr)+\epsilon\gamma \bigl(X_t^\epsilon
\bigr) \bigr)dt+a \bigl(X_t^\epsilon \bigr)d\mathbb
{W}_t,\qquad X_0^\epsilon=x,
\end{equation}
where $\epsilon$ is a small scalar and $\gamma:[0,T]\times\mathbb
{R}^3\mapsto\mathbb{R}^3$ is a bounded function. Furthermore, $\beta$
and $a$ satisfy the regularity conditions that are discussed above.

To interpret the impact of a structural change in the drift and the
price, one should perturb the price process as in the following definition.
\begin{definition}
Suppose $X_t^\epsilon$ is the solution of the SDE system given in (\ref
{eq:perturb1}) for $t\in [0, T ]$ and $\varPhi$ is a
continuously differentiable function at least order two with bounded derivatives. Then,
the perturbed price process $p^\epsilon (x )$ is given by
\begin{equation}
\label{eq:perturb2} p^\epsilon (x )=\newmathds{E} \bigl[e^{-\int_{0}^{T}{r_t^\epsilon
\;dt}}\varPhi
\bigl(X_{t_1}^\epsilon,\ldots, X_{t_n}^\epsilon
\bigr)|\mathcal{F}_0 \bigr].
\end{equation}
\end{definition}
Now it is convenient to present the following proposition to show the
sensitivity of the option to the parameter $\epsilon$ in the point
$\epsilon=0$.
\begin{proposition}
\label{prp:variationDrift}
Suppose that $\beta$, $a$ are continuously differentiable functions
with bounded derivatives, and $a$ satisfies the uniform ellipticity
condition~(\ref{eqn:ellipticity}). Then, for any square integrable and
continuously differentiable function with bounded derivatives $\varPhi$,
$\epsilon\longmapsto p^\epsilon(x)$ is differentiable at any $x\in
\mathbb{R}^3$ and
\begin{align*}
\frac{\partial p^\epsilon(x)}{\partial\epsilon}|_{\epsilon=0} &= \newmathds{E} \biggl[\varPhi
\bigl(X_{t_1}^\epsilon,\ldots, X_{t_n}^\epsilon
\bigr) \frac{\partial}{\partial\epsilon}e^{-\int_0^T{r_t^\epsilon\;
dt}}|_{\epsilon=0} \biggr]
\\
&\quad+\newmathds{E} \Biggl[e^{-\int_0^T{r_t^\epsilon dt}}\varPhi (X_{t_1},\ldots,
X_{t_n} ) \int_0^T{ \alpha(t)
\bigl(a^{-1}(X_t)\gamma(X_t)
\bigr)^{\top} d\mathbb{W}_t}|_{\epsilon=0} \Biggr].
\end{align*}
\end{proposition}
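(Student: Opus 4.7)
The plan is to combine differentiation under the expectation with Girsanov's theorem (which is equivalent here to the Malliavin integration-by-parts identity underpinning Proposition~\ref{prp:origibismut}) in order to trade the pathwise derivative of the payoff for an explicit stochastic weight. The standing hypotheses---continuously differentiable coefficients with bounded derivatives, uniform ellipticity of $a$, boundedness of $\gamma$, and smoothness/bounded-derivative of $\varPhi$---together with classical $L^p$-continuity of solutions of perturbed SDEs, guarantee that $\epsilon\mapsto X^\epsilon$ is differentiable in every $L^p$ and that $\partial_\epsilon$ and $\newmathds{E}$ may be interchanged. The tangent process $\xi_t:=\partial_\epsilon X_t^\epsilon|_{\epsilon=0}$ satisfies the linearized SDE driven by $\gamma(X_t)$ and admits, via variation of constants based on the first variation process of Definition~\ref{def:firstvar}, the explicit representation $\xi_t=Y_t\int_0^t Y_s^{-1}\gamma(X_s)\,ds$.

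Granting this regularity, I would apply the product rule inside the expectation:
\begin{align*}
\frac{\partial p^\epsilon(x)}{\partial\epsilon}\bigg|_{\epsilon=0}
&= \newmathds{E}\!\left[\varPhi(X_{t_1},\ldots,X_{t_n})\,\frac{\partial}{\partial\epsilon}e^{-\int_0^T r_t^\epsilon\,dt}\bigg|_{\epsilon=0}\right]\\
&\quad + \newmathds{E}\!\left[e^{-\int_0^T r_t\,dt}\,\frac{\partial}{\partial\epsilon}\varPhi\bigl(X_{t_1}^\epsilon,\ldots,X_{t_n}^\epsilon\bigr)\bigg|_{\epsilon=0}\right].
\end{align*}
The first summand already matches the first term in the statement, so the work lies in re-expressing the second. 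For this, I would introduce the Girsanov density
\begin{equation*}
Z_T^\epsilon=\exp\!\left(\epsilon\int_0^T \alpha(t)\bigl(a^{-1}(X_t)\gamma(X_t)\bigr)^{\!\top}d\mathbb{W}_t-\tfrac{\epsilon^2}{2}\int_0^T\alpha(t)^2\bigl\|a^{-1}(X_t)\gamma(X_t)\bigr\|^2 dt\right)
\end{equation*}
with $\alpha\in\varGamma_n$. Uniform ellipticity makes $a^{-1}$ bounded, and together with the boundedness of $\gamma$ this gives Novikov's condition, so $(Z_T^\epsilon)$ is a true martingale. Under the equivalent measure $Z_T^\epsilon\cdot\mathbb{Q}$, the shifted process $\mathbb{W}_t-\epsilon\int_0^t\alpha(s)a^{-1}(X_s)\gamma(X_s)\,ds$ is Brownian and $X$ acquires the drift $\beta+\epsilon\alpha\gamma$; the $\varGamma_n$-normalization $\int_0^{t_i}\alpha(t)dt=1$ guarantees that the marginal laws of $(X_{t_1},\ldots,X_{t_n})$ under $Z_T^\epsilon\cdot\mathbb{Q}$ match those of $(X_{t_1}^\epsilon,\ldots,X_{t_n}^\epsilon)$ under $\mathbb{Q}$. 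Differentiating the resulting identity at $\epsilon=0$ and using $\partial_\epsilon Z_T^\epsilon|_0=\int_0^T\alpha(t)(a^{-1}(X_t)\gamma(X_t))^\top d\mathbb{W}_t$ yields exactly the Malliavin weight appearing in the second term of the statement.

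The principal obstacle is that $e^{-\int_0^T r_t^\epsilon dt}$ is itself a functional of (the third component of) $X^\epsilon$, so a direct Girsanov change applied to $p^\epsilon$ as a whole would shift both the payoff and the discount factor simultaneously. The product-rule split is precisely what isolates the drift perturbation inside one expectation that depends on $\epsilon$ only through $\varPhi(X^\epsilon)$; only then is Girsanov applicable cleanly. Secondary technicalities are the verification of Novikov's condition via uniform ellipticity and the $L^p$-dominations needed to legitimate the differentiation-expectation exchange and the passage to the limit under the stochastic integral; both follow from the standing bounded-derivative assumptions on $\beta,a,\gamma,\varPhi$ combined with standard moment estimates for $X$ and $Y$.
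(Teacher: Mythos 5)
The paper offers no proof of this proposition at all --- it simply points to Fourni\'e et al.\ --- so your sketch has to be measured against the standard argument there, and against that benchmark your central step does not go through. After the product-rule split, the second summand is $\newmathds{E}[e^{-\int_0^T r_t\,dt}\,\partial_\epsilon\varPhi(X^\epsilon_{t_1},\dots,X^\epsilon_{t_n})|_{\epsilon=0}]$, and you assert that the split has produced ``one expectation that depends on $\epsilon$ only through $\varPhi(X^\epsilon)$'' so that Girsanov applies cleanly. It has not: the frozen discount factor $e^{-\int_0^T r_t\,dt}$ is a nondegenerate random functional of the unperturbed path, correlated with $X^\epsilon$. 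Girsanov identifies the law of $X^\epsilon$ under $\mathbb{Q}$ with the law of $X$ under $Z_T^\epsilon\cdot\mathbb{Q}$ and therefore converts expectations of functionals of the single path $X^\epsilon$; it says nothing about a mixed expectation $\newmathds{E}[D\,\varPhi(X^\epsilon)]$ with $D$ a fixed non-constant random variable, since that quantity depends on the joint law of $(D,X^\epsilon)$ and not only on the law of $X^\epsilon$. The correct tool for this summand is the Malliavin duality applied to $\int_0^T D_s\varPhi\cdot(a^{-1}\gamma)(X_s)\,ds$, and it produces, besides $\newmathds{E}[e^{-\int_0^T r_t dt}\varPhi\,\delta(a^{-1}\gamma)]$, the cross term $-\newmathds{E}[\varPhi\int_0^T D_s(e^{-\int_0^T r_t dt})\,(a^{-1}\gamma)(X_s)\,ds]$, which is exactly minus your first summand and so changes the final bookkeeping; your sketch never sees this term.

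Second, your justification for the factor $\alpha(t)$ in the weight is incorrect. Under the tilted measure built from the shift $\epsilon\alpha(t)a^{-1}(X_t)\gamma(X_t)$, the process $X$ acquires the drift $\beta+\epsilon\alpha(t)\gamma$, not $\beta+\epsilon\gamma$, and the normalization $\int_0^{t_i}\alpha(t)\,dt=1$ does not make the laws of $(X_{t_1},\dots,X_{t_n})$ under the tilted measure coincide with those of $(X^\epsilon_{t_1},\dots,X^\epsilon_{t_n})$ under $\mathbb{Q}$: with $\beta=0$, $a=1$, $\gamma=1$ in one dimension one has $X^\epsilon_{t_i}=x+\epsilon t_i+W_{t_i}$, while the tilted process has mean $x+\epsilon$, and these agree only when $t_i=1$. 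The set $\varGamma_n$ is the right device for the initial-condition and diffusion perturbations, where the tangent process at $t_i$ is $Y_{t_i}$ times a fixed vector; for a drift perturbation the tangent process is $Y_t\int_0^tY_s^{-1}\gamma(X_s)\,ds$ (as you correctly write), the indicator $\mathbh{1}_{s\le t_i}$ is already built into $D_sX_{t_i}$, and the resulting weight is $\int_0^T(a^{-1}(X_t)\gamma(X_t))^{\top}d\mathbb{W}_t$ with no $\alpha$, exactly as in Fourni\'e et al. As written, your argument establishes neither the displayed formula nor its $\alpha$-free variant; you would need either to build the modulation $\alpha$ into the perturbed SDE or to remove it from the weight, and in either case to track the discount cross term above.
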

\begin{proof}
See the proof in~\cite{Fournie1999}.
\end{proof}

\begin{proposition}
\label{prp:Rho}
Suppose $\beta$ and $a$ are continuously differentiable functions with
bounded derivatives. Moreover, $a$ satisfies the uniform ellipticity
condition (\ref{eqn:ellipticity}), and the square integrable option
payoff function $\varPhi$ is a continuously differentiable function with
bounded derivatives. Then, Rho of the option is
\begin{equation}
\label{eq:rho} \mathit{Rho}=\newmathds{E} \bigl[\varPhi (X_{t_1},\ldots,
X_{t_n} ) \mathit{Rho}_{\mathit{MW}} \bigr],
\end{equation}
where
\begin{align*}
\mathit{Rho}_{\mathit{MW}}&= \frac{e^{-\int_{0}^{T}{r_tdt}}}{T} \Biggl(\int_{0}^{T}{
\frac
{dW_t^1}{\sigma (V_t )}}-\frac{\rho_{12}}{\mu_1}\int_{0}^{T}{
\frac{dW_t^2}{\sigma (V_t )}}
\\
&\quad+ \frac{\rho_{12}
\mu_2- \rho_{13} \mu_1}{\mu_1 \mu_3} \int_{0}^{T} {
\frac{dW_t^3}{\sigma
 (V_t )}} - T^2 \Biggr).
\end{align*}
\end{proposition}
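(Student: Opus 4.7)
The plan is to derive the Rho formula by applying Proposition~\ref{prp:variationDrift} with a perturbation direction that realizes a parallel shift of the short rate, and by reusing the inverse diffusion matrix displayed in the proof of Proposition~\ref{prp:Malliavindelta}. A shift $r_t \mapsto r_t + \epsilon$ affects the pricing problem in two places: the drift of $S_t$ acquires an extra $\epsilon S_t$ term, and the discount factor becomes $e^{-\int_{0}^{T}(r_t+\epsilon)dt} = e^{-\epsilon T} e^{-\int_{0}^{T}r_t dt}$. I would therefore choose the perturbation $\gamma(X_t) = (S_t, 0, 0)^\top$ in (\ref{eq:perturb1}) to produce the drift shift, and carry the extra factor $e^{-\epsilon T}$ in the discount explicitly.

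Under this choice Proposition~\ref{prp:variationDrift} splits $\partial p^\epsilon/\partial\epsilon|_{\epsilon = 0}$ into two pieces. The discount-factor piece evaluates to $\newmathds{E}[\varPhi(X_{t_1},\ldots,X_{t_n})(-T)e^{-\int_{0}^{T}r_t dt}]$, which becomes the $-T^2$ contribution once the common prefactor $e^{-\int_{0}^{T}r_t dt}/T$ is extracted. For the stochastic-integral piece, $a^{-1}(X_t)\gamma(X_t)$ is simply $S_t$ times the first column of the inverse matrix recorded in the proof of Proposition~\ref{prp:Malliavindelta}; the factors of $S_t$ cancel and leave the vector $\bigl(\tfrac{1}{\sigma(V_t)},\,\tfrac{-\rho_{12}}{\mu_1\sigma(V_t)},\,\tfrac{\rho_{12}\mu_2 - \rho_{13}\mu_1}{\mu_1\mu_3\sigma(V_t)}\bigr)^\top$. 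Taking $\alpha(t) = 1/T \in \varGamma_n$ (admissible because European pricing has $t_1 = T$) converts the three components into the three It\^o integrals that appear in $\mathit{Rho}_{\mathit{MW}}$.

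The principal subtlety will be the bookkeeping for how $\epsilon$ enters the discount. Under a literal reading of (\ref{eq:perturb2}) with $\gamma_3 = 0$, one has $r_t^\epsilon \equiv r_t$ and the discount contribution would naively vanish, so one must justify that the operative perturbation for Rho is a parallel shift carried simultaneously in the $S$-drift and in the discount factor, rather than an SDE perturbation of the third coordinate alone. Once this point is settled, the computation of $a^{-1}\gamma$, the choice of $\alpha$, and the assembly of the three It\^o integrals under the common prefactor $e^{-\int_0^T r_t dt}/T$ are mechanical; integrability and applicability of Proposition~\ref{prp:variationDrift} follow from the hypotheses on $\beta$, $a$, and $\varPhi$, together with the uniform ellipticity~(\ref{eqn:ellipticity}) which ensures that $a^{-1}\gamma$ lies in the required $L^2([0,T]\times\varOmega)$ space.
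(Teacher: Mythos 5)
Your proposal is correct and follows essentially the same route as the paper: the same choice $\gamma(X_t)=(S_t,0,0)^{\top}$, the same computation of $(a^{-1}\gamma)^{\top}$ from the inverse diffusion matrix, the same $\alpha(t)=1/T$, and the same split into a stochastic-integral term plus the $-T^2$ contribution from differentiating the discount factor $e^{-\epsilon T}e^{-\int_0^T r_t\,dt}$. You are in fact more explicit than the paper about the one delicate point --- that with $\gamma_3=0$ the perturbed short rate $r_t^{\epsilon}$ is unchanged, so the appearance of $\epsilon$ in the discount must be imposed as part of the parallel-shift interpretation rather than read off from (\ref{eq:perturb2}) --- which the paper simply asserts without justification.
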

\begin{proof}
Rho measures the effect of a change in the interest rate on the option
price. In Proposition~\ref{prp:variationDrift}, there are mainly three
sources of perturbation: drift terms of risky asset, volatility, and
interest rate processes. The function $\gamma(X_t)$ can be chosen as
any combination of these three sources. Since the study is
investigating the effect of the interest rate on the option price, it
should perturb the original drift with\vadjust{\goodbreak} $\gamma (X_t )=
(S_t, 0,0 )^{\top}$. In this case, $\gamma$ is a bounded function
since $t\in[0, T]$ is a continuous time trading economy with a finite
time horizon. Note the fact that the dynamics are given under a
risk-neutral probability measure and perturbation. Hence, the discount
process becomes $e^{-\int_0^T{ (r_t+\epsilon ) dt}}$.

First one should find
\begin{equation*}
\bigl(a^{-1} (X_t )\gamma (X_t )
\bigr)^{\top}= \biggl(\frac{1}{\sigma (V_t )}, -\frac{\rho_{12}}{\mu_1 \sigma
(V_t )},
\frac{\rho_{12} \mu_2 - \rho_{13} \mu_1}{\mu_1 \mu_3
\sigma (V_t )} \biggr).
\end{equation*}
Then, by inserting the equation above into the expectation term, one obtains
\begin{align*}
\mathit{Rho}&=\newmathds{E} \Biggl[e^{-\int_{0}^{T}{r_tdt}}\varPhi (X_{t_1},\ldots,
X_{t_n} ) \frac{1}{T} \Biggl(\int_{0}^{T}{
\frac{dW_t^1}{\sigma
(V_t )}}-\frac{\rho_{12}}{\mu_1}\int_{0}^{T}{
\frac{dW_t^2}{\sigma
 (V_t )}}
\\
&\quad+ \frac{\rho_{12} \mu_2- \rho_{13} \mu_1}{\mu_1 \mu_3} \int_{0}^{T} {
\frac{dW_t^3}{\sigma (V_t )}} \Biggr) \Biggr] -\newmathds{E} \bigl[Te^{-\int_{0}^{T}{r_tdt}}\varPhi
(X_{t_1},\ldots, X_{t_n} ) \bigr].\qedhere
\end{align*}
\end{proof}
%
It is also possible to compute the effect of other parameters on the
option by special choices of $\gamma$. For instance, as an immediate
result of Proposition~\ref{prp:Rho}, one can present the following two remarks.
\begin{remark}
Suppose that the assumptions in Proposition~\ref{prp:Rho} hold and the
stock price evolves from the Heston model with a stochastic interest
rate. Then, if $\gamma(X_t)=(0, \kappa, 0)^{\top}$, one obtains the
sensitivity of the option price with respect to $\kappa$.
\end{remark}
\begin{remark}
Suppose that the assumptions in Proposition~\ref{prp:Rho} hold and the
interest rate is assumed to follow the Vasicek interest rate model.
Then, if $\gamma(X_t)=(0, 0, a)^{\top}$, one obtains the effect of
``speed of reversion'' parameter on the option price.
\end{remark}
%
\subsection{Computation of Vega}
For the computation of Vega, one need to define a new perturbed process
as in the computation of Rho, but in this case, the perturbation will
occur in the diffusion term. However, in the end, it is necessary to
calculate the Skorohod integral. Hence, the result of Proposition~\ref
{prp:skorohod} will be used.

The perturbation approach in this section is based on the approach used
in~\cite{Davis2006,Fournie1999}. First, consider the perturbed asset
price process
\begin{equation}
\label{eq:varinvol} dX_t^\epsilon=\beta \bigl(X_t^\epsilon
\bigr)dt+ \bigl(a \bigl(X_t^\epsilon \bigr)+\epsilon\gamma
\bigl(X_t^\epsilon \bigr) \bigr)d\mathbb{W}_t,\qquad
X_0^\epsilon=x,
\end{equation}
where $\epsilon$ is a small scalar, $\gamma$ is a $3\times3$ matrix
valued continuously differentiable function with bounded derivatives.
Furthermore, $\beta$ and $(a+\epsilon\gamma)$ satisfy the
aforementioned regularity conditions. Here it is necessary to introduce
a variation process with respect to $\epsilon$, which is the derivative
of $X_t^\epsilon$ with respect to the parameter $\epsilon$,
$Z_t^\epsilon=\frac{\partial X_t^\epsilon}{\partial\epsilon}$,
\begin{align}
\label{vega:Zt} dZ_t^\epsilon&= \beta^\prime
\bigl(X_t^\epsilon \bigr)Z_t^\epsilon dt
\nonumber
\\
&\quad+\sum_{i=1}^{3}{ \bigl(a_i^\prime
\bigl(X_t^{\epsilon
} \bigr)+\epsilon\gamma_i^\prime
\bigl(X_t^{\epsilon} \bigr) \bigr)Z_t^\epsilon
dW_t^{i}}+\gamma \bigl(X_t^{\epsilon}
\bigr)d\mathbb{W}_t,
\\
\nonumber
Z_0^\epsilon&=\mathbh{0}_{3\times3}.
\end{align}
Here, $\gamma_i^\prime$ denotes the derivative of $i$th column.\vadjust{\goodbreak}

To avoid degeneracy,~the set $\tilde{\varGamma_n}$ of square integrable
functions in $\mathbb{R}$,
\begin{equation*}
\tilde{\varGamma_n}= \Biggl\{\tilde{\alpha} \in L^2
\bigl( [0, T ] \bigr):\int_{t_{i-1}}^{t_i}{\tilde{\alpha}
(t )dt}=1, \forall i=1,\ldots,n \Biggr\},
\end{equation*}
is defined in \cite{Fournie1999}. The following Proposition tells how
sensitive the price of an option on the perturbed process is to
$\epsilon$ in the point $\epsilon=0$.
\begin{proposition}
\label{prp:genvega}
Suppose that $a$ satisfies the uniform ellipticity condition (\ref
{eqn:ellipticity}) and for
$B_{t_i}=Y_{t_i}^{-1}Z_{t_i}=Y_{t_i}^{-1}Z_{t_i}^{\epsilon=0}$,
$i=1,\ldots,n$, there exists $a^{-1} (X )YB \in Dom(\delta)$.
Then, for any square integrable option payoff function, $\varPhi$, with
continuously differentiable and bounded derivatives,
\begin{equation*}
\frac{\partial}{\partial\epsilon}p^{\epsilon} (x )|_{\epsilon
=0}=\newmathds{E}
\bigl[e^{-\int_0^T{r_tdt}}\varPhi (X_{t_1},\ldots, X_{t_n} )\delta
\bigl(a^{-1} (X_. )Y_.\tilde{B}_. \bigr) \bigr]
\end{equation*}
holds. Here,
\begin{equation*}
\label{tildeB} \tilde{B}_t=\sum_{i=1}^{n}{
\tilde{\alpha} (t ) (B_{t_i}-B_{t_{i-1}} )\mathbh{1}_{ \{t\in[t_{i-1},t_i) \}}},
\end{equation*}
for $t_0=0$ and $\tilde{\alpha}\in\tilde{\varGamma_n}$. Moreover, if $B$
is Malliavin differentiable, the Skorohod integral is calculated
according to Remark~\ref{rmrk:trace} and it is
\begin{align*}
\delta \bigl(a^{-1} (X_. )Y_.\tilde{B}_. \bigr)&=\sum
_{i=1}^{n} \Biggl\{ B_{t_i}^{\top}
\int_{t_{i-1}}^{t_i}{\tilde{\alpha }(t)
\bigl(a^{-1} (X_t )Y_t \bigr)^{\top} d
\mathbb{W}_t}
\\
&\quad - \int_{t_{i-1}}^{t_i}{\tilde{\alpha}(t) Tr \bigl(
(D_t B_{t_i} )a^{-1} (X_t
)Y_t \bigr)dt}
\\
&\quad - \int_{t_{i-1}}^{t_i}{\tilde{\alpha}(t)
\bigl(a^{-1} (X_t )Y_t B_{t_{i-1}}
\bigr)^{\top} d\mathbb{W}_t} \Biggr\}.
\end{align*}
\end{proposition}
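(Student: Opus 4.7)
The plan is to differentiate under the expectation, combine the classical and Malliavin chain rules, and then invoke the duality between the Malliavin derivative $D$ and the Skorohod integral $\delta$ to obtain the stated Skorohod-integral formula; this is the same overall strategy that underlies the classical Vega formula in~\cite{Fournie1999}. First I would justify differentiation under the integral sign for $p^{\epsilon}(x)$, so that
\begin{equation*}
\frac{\partial p^{\epsilon}(x)}{\partial\epsilon}\bigg|_{\epsilon=0}=\newmathds{E}\Bigg[e^{-\int_0^T r_t\,dt}\sum_{i=1}^{n}\nabla_i\varPhi(X_{t_1},\ldots,X_{t_n})\,Z_{t_i}\Bigg];
\end{equation*}
this uses the $C^1$-boundedness of $\varPhi$ together with standard $L^p$-estimates on $X^{\epsilon}$ and $Z^{\epsilon}$, and it rests on the (implicit) assumption that the perturbation $\gamma$ in~(\ref{eq:varinvol}) does not propagate into the interest-rate component, so that the discount factor is $\epsilon$-independent.

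Next I would substitute $Z_{t_i}=Y_{t_i}B_{t_i}$. A telescoping argument, using $B_0=\mathbh{0}_{3\times 3}$ and $\int_{t_{j-1}}^{t_j}\tilde\alpha(t)\,dt=1$, gives $B_{t_i}=\int_0^T\tilde B_t\,\mathbh{1}_{t\le t_i}\,dt$. I would then invoke~(\ref{eq:malderX}) in the rewritten form $Y_{t_i}\mathbh{1}_{t\le t_i}=D_tX_{t_i}\,a^{-1}(X_t)Y_t$ and combine it with the classical chain rule $\sum_i\nabla_i\varPhi\cdot D_tX_{t_i}=D_t\varPhi(X_{t_1},\ldots,X_{t_n})$, so that the expectation displayed above rewrites as
\begin{equation*}
\newmathds{E}\Bigg[e^{-\int_0^T r_t\,dt}\int_0^T D_t\varPhi\cdot\bigl(a^{-1}(X_t)Y_t\tilde B_t\bigr)\,dt\Bigg].
\end{equation*}
The $D$--$\delta$ duality applied to the random variable $F=e^{-\int_0^T r_t\,dt}\varPhi$ and to the process $u=a^{-1}(X_.)Y_.\tilde B_.$---which, by hypothesis, lies in $\mathrm{Dom}(\delta)$---then yields the Skorohod-integral form stated in the proposition.

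For the explicit expansion of $\delta(a^{-1}(X_.)Y_.\tilde B_.)$ when each $B_{t_i}$ is Malliavin differentiable, I would split the integral according to the partition $\{[t_{i-1},t_i)\}_{i=1}^{n}$ and, on each subinterval, apply the Skorohod product rule $\delta(Fu)=F\,\delta(u)-\int_0^T(D_tF)\cdot u_t\,dt$ with $F=B_{t_i}-B_{t_{i-1}}$ and $u_t=\tilde\alpha(t)\,a^{-1}(X_t)Y_t\,\mathbh{1}_{[t_{i-1},t_i)}(t)$. Writing out the difference $B_{t_i}-B_{t_{i-1}}$ and collecting the resulting terms produces exactly the three lines in the claim; the trace correction arises because $F$ is matrix-valued and $D_tF\cdot u_t$ must be contracted against the matrix-valued integrand, as flagged in Remark~\ref{rmrk:trace}.

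The main technical obstacles I anticipate are: (i)~justifying the interchange of $\partial_\epsilon$ and $\newmathds{E}$ uniformly in a neighborhood of $\epsilon=0$, which requires uniform $L^p$-bounds on $X^{\epsilon}$ and $Z^{\epsilon}$ coming from the Lipschitz and polynomial-growth assumptions; (ii)~handling the random discount factor in the duality step cleanly---most conveniently by absorbing it into $\varPhi$, but one should then check that the product lies in $\newmathds{D}^{1,2}$ (or, equivalently, that the $\gamma$-perturbation leaves the $r$-component of $X$ intact so that no extra Malliavin-derivative term appears); and (iii)~keeping the matrix/vector algebra transparent throughout, since the statement involves matrix-valued Malliavin derivatives paired against matrix-valued processes, and the trace term in the second formula is precisely the price paid for this non-commutative bookkeeping.
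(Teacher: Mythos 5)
The paper does not actually contain a proof of this proposition: its ``proof'' is a one-line citation to Davis and Johansson, where the short rate is constant. Your reconstruction follows exactly the route of that reference --- differentiate under the expectation, write $Z_{t_i}=Y_{t_i}B_{t_i}$, telescope to $B_{t_i}=\int_0^{T}\tilde B_t\mathbh{1}_{t\le t_i}\,dt$ using $\int_{t_{j-1}}^{t_j}\tilde\alpha(t)\,dt=1$, insert $Y_{t_i}\mathbh{1}_{t\le t_i}=D_tX_{t_i}\,a^{-1}(X_t)Y_t$ from the Malliavin derivative formula, apply the chain rule and then the $D$--$\delta$ duality, and expand the Skorohod integral interval by interval with the product rule. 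All of these steps are sound, including the three-term expansion, where the facts that $D_tB_{t_{i-1}}=0$ for $t>t_{i-1}$ and that $B_{t_{i-1}}$ is adapted on $[t_{i-1},t_i)$ are what make the bookkeeping come out as stated. You also correctly identify the implicit hypothesis that $\gamma$ must not feed into the rate equation, without which even the first displayed derivative is wrong (compare Proposition~\ref{prp:variationDrift}, where the $\epsilon$-derivative of the discount factor survives as a separate term).

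The one step you flag but do not close is the duality step with the random discount factor, and your parenthetical diagnosis conflates two different issues: membership of $e^{-\int_0^Tr_t\,dt}\varPhi$ in $\mathbb{D}^{1,2}$ is an integrability matter and does not by itself remove the cross term. Applying duality to $F=e^{-\int_0^Tr_t\,dt}\varPhi$ gives
\begin{equation*}
\newmathds{E}\bigl[F\,\delta(u)\bigr]=\newmathds{E}\Biggl[e^{-\int_0^Tr_t\,dt}\int_0^TD_t\varPhi\cdot u_t\,dt\Biggr]+\newmathds{E}\Biggl[\varPhi\int_0^TD_t\bigl(e^{-\int_0^Tr_s\,ds}\bigr)\cdot u_t\,dt\Biggr],
\end{equation*}
and the stated identity holds only if the second expectation vanishes. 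It does vanish here, but for a reason you must make explicit: with $u_t=a^{-1}(X_t)Y_t\tilde B_t$ one has, for $t\le s$, $D_tr_s\cdot u_t=e_3^{\top}Y_sY_t^{-1}a(X_t)\,a^{-1}(X_t)Y_t\tilde B_t=e_3^{\top}Y_s\tilde B_t=Y_s^{33}\tilde B_t^{3}$, using $Y_s^{31}=Y_s^{32}=0$ from Proposition~\ref{prop:firstvariation}. Hence the cross term is proportional to the third component of $\tilde B$, which is zero precisely when the perturbation leaves the interest-rate equation untouched (as it does for the Vega choice of $\gamma$ with only the $(1,1)$ entry nonzero), since $B^3=Z^3/Y^{33}$. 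For a general $\gamma$ with a nonzero third row, the right-hand side of the proposition must be corrected by the additional term $\newmathds{E}[\varPhi\,e^{-\int_0^Tr_s\,ds}\int_0^T(\int_t^TY_s^{33}\,ds)\tilde B_t^{3}\,dt]$; in the cited reference the rate is deterministic, so this issue never arises there, and your write-up should either impose the restriction on $\gamma$ or carry out the short cancellation computation above.
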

\begin{proof}
The proof can be found in~\cite{Davis2006}. 
\end{proof}
\begin{proposition}
Consider the three-dimensional SDE (\ref{eq:multiSDE}) and its
perturbed process (\ref{eq:varinvol}). Assume that $\beta$ and $a$ are
continuously differentiable functions with bounded derivatives and,
moreover, $a$ satisfies the uniform ellipticity condition (\ref
{eqn:ellipticity}). Then, Vega of an option with a square integrable
payoff function $\varPhi$, which is continuously differentiable with
bounded derivatives, is
\begin{align}
\mathit{Vega}^P&=\newmathds{E} \bigl[\varPhi (X_{t_1},\ldots,
X_{t_n} ) \mathit{Vega}_{\mathit{MW}} \bigr],
\end{align}
where
$\mathit{Vega}_{\mathit{MW}}$ is the Maliavin weight of Vega, and
\begin{align*}
\mathit{Vega}_{\mathit{MW}} &= e^{-\int_0^T{r_s ds}} \sum_{i=1}^{n}
\frac
{1}{t_i-t_{i-1}} \Biggl\{ \Biggl( \bigl(W_{t_i}^1 -
W_{t_{i-1}}^1\bigr) - \int_{t_{i-1}}^{t_i}
\sigma(V_t) dt \Biggr)
\\
&\quad\times \Biggl( \int_{t_{i-1}}^{t_i} \frac{1} {\sigma(V_t)}
dW_t^1 - \frac
{\rho_{12}}{\mu_1} \int_{t_{i-1}}^{t_i}
\frac{1}{ \sigma(V_t)} dW_t^2
\\
&\qquad+ \frac{ \rho_{12} \mu_2 - \rho_{13} \mu_1 }{\mu_1 \mu_3} \int_{t_{i-1}}^{t_i}
\frac{1}{ \sigma(V_t)} dW_t^3 \Biggr)- \int
_{t_{i-1}}^{t_i} {\frac{1} {\sigma(V_t)} dt } \Biggr\}.
\end{align*}
\end{proposition}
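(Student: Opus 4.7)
The plan is to apply Proposition~\ref{prp:genvega} with the perturbation matrix $\gamma(X_t)$ chosen to have only its $(1,1)$-entry nonzero and equal to $S_t$, so that $\gamma(X_t)\,d\mathbb{W}_t = (S_t\,dW_t^1,\,0,\,0)^{\top}$. This corresponds to the additive perturbation $\sigma \mapsto \sigma + \epsilon$ of the stock's local volatility, which is the natural Vega direction.

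The first step is to identify the variation process $Z_t = \partial X_t^\epsilon/\partial\epsilon|_{\epsilon=0}$ from~(\ref{vega:Zt}). Because $\gamma$ acts only on the first coordinate and $Z_0=0$, the coordinates $Z_t^{(2)}$ and $Z_t^{(3)}$ satisfy linear homogeneous SDEs with zero initial condition and therefore vanish identically. The first coordinate obeys
\begin{equation*}
dZ_t^{(1)} = r_t Z_t^{(1)}\,dt + \sigma(V_t) Z_t^{(1)}\,dW_t^1 + S_t\,dW_t^1,\qquad Z_0^{(1)}=0,
\end{equation*}
whose homogeneous part is exactly $Y_t^{11}$. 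A variation-of-constants calculation, combined with $Y_t^{11} = S_t/S_0$ from Remark~\ref{rmrk:delta}, gives $Z_t^{(1)} = S_t\bigl(W_t^1 - \int_0^t \sigma(V_s)\,ds\bigr)$. Using the upper-triangular form of $Y_t^{-1}$ from Proposition~\ref{prop:firstvariation}, $B_t = Y_t^{-1} Z_t$ has only its first component nonzero, equal to $S_0\bigl(W_t^1 - \int_0^t \sigma(V_s)\,ds\bigr)$.

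The second step is to insert $B_{t_i}$, $B_{t_{i-1}}$, and the explicit form of $(a^{-1}(X_t)Y_t)^{\top}$ already computed in the proof of Proposition~\ref{prp:Malliavindelta} into the three-term Skorohod decomposition of Proposition~\ref{prp:genvega}, with the simplest choice $\tilde\alpha(t) = 1/(t_i - t_{i-1})$ on $[t_{i-1},t_i)$. Only the first components of the $B$'s survive, and the ratio $Y_t^{11}/S_t$ collapses to $1/S_0$. The first and third terms then telescope into $\frac{1}{t_i - t_{i-1}}\bigl[(W_{t_i}^1 - W_{t_{i-1}}^1) - \int_{t_{i-1}}^{t_i}\sigma(V_t)\,dt\bigr]$ multiplied by the three Brownian integrals displayed in the statement. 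Summing over $i$ and attaching the discount $e^{-\int_0^T r_s\,ds}$ produces the announced $\mathit{Vega}_{\mathit{MW}}$.

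The main obstacle is the trace term $\mathrm{Tr}((D_t B_{t_i})\,a^{-1}(X_t)Y_t)$. A direct Malliavin computation gives $(D_t B_{t_i})_{1,1} = S_0 - S_0 \rho_{12} v(V_t)\int_t^{t_i}\sigma'(V_s)Y_s^{22}/Y_t^{22}\,ds$ and $(D_t B_{t_i})_{1,2} = -S_0 \mu_1 v(V_t)\int_t^{t_i}\sigma'(V_s)Y_s^{22}/Y_t^{22}\,ds$, using $D_t^{(1)}V_s = \rho_{12} v(V_t)Y_s^{22}/Y_t^{22}$ and $D_t^{(2)}V_s = \mu_1 v(V_t)Y_s^{22}/Y_t^{22}$. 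When these are paired against $(a^{-1}Y)_{1,1} = 1/(S_0\sigma)$ and $(a^{-1}Y)_{2,1} = -\rho_{12}/(\mu_1 S_0\sigma)$, the two $\sigma'$ corrections cancel exactly, leaving $\mathrm{Tr}(\cdots) = 1/\sigma(V_t)$ and hence the stated $-\int_{t_{i-1}}^{t_i} 1/\sigma(V_t)\,dt$ piece; this cancellation is the key point and reflects the Cholesky structure of $a^{-1}$ relative to the correlation parametrization of $V$. A secondary point is to verify $a^{-1}(X_.)Y_.\tilde B_.\in\mathrm{Dom}(\delta)$, which follows from uniform ellipticity, the exponential moments of $Y^{11}$ and $Y^{22}$, and the polynomial growth of $B_t$ in the driving Brownian motion.
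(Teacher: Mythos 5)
Your proposal is correct and follows essentially the same route as the paper's own proof: the same choice of $\gamma$ with only the $(1,1)$-entry $S_t$, the same identification $Z_t^{(1)}=S_t(W_t^1-\int_0^t\sigma(V_s)\,ds)$ and $B_{t_i}^1=S_0(W_{t_i}^1-\int_0^{t_i}\sigma(V_s)\,ds)$, the same choice $\tilde\alpha=1/(t_i-t_{i-1})$, and the same evaluation of the trace term as $1/\sigma(V_t)$. Your explicit display of the cancellation between the $(D_tB)_{1,1}$ and $(D_tB)_{1,2}$ contributions in the trace, and the remark on checking $a^{-1}(X_.)Y_.\tilde B_.\in\mathrm{Dom}(\delta)$, are details the paper states but does not spell out, and they are consistent with its computation.
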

\begin{proof}
First obtain a perturbed process
by perturbing the original diffusion matrix with $\gamma$, where it is
chosen as
\begin{equation*}
\gamma (X_t )= %
\begin{bmatrix}
S_t & 0 & 0 \\[0.3em]
0 & 0 & 0 \\[0.3em]
0 & 0 & 0
\end{bmatrix} %
.
\end{equation*}
Then, using the fact that $V_t$ and $r_t$ do not depend on $S_t$, one
can deduce that the variation process $Z_t^{\epsilon}$ has vanishing
components; so in $Z_t^{\epsilon=0}$ the components $Z_t^2$ and $Z_t^3$
are almost surely zero. Then, as in Proposition~\ref{prp:genvega},
define the vector
$B_{t_i}=Y_{t_i}^{-1}Z_{t_i}=Y_{t_i}^{-1}Z_{t_i}^{\epsilon=0}$,
$i=1,\ldots,n$ for $t_{i} \in[0,T]$. Here,
\begin{equation*}
Y_{t_i}^{-1}= %
\begin{bmatrix}
\frac{1}{Y_{t_i}^{11}} & -\frac
{Y_{t_i}^{12}}{Y_{t_i}^{11}Y_{t_i}^{22}} & -\frac
{Y_{t_i}^{13}}{Y_{t_i}^{11}Y_{t_i}^{33}} \\[0.3em]
0 & \frac{1}{Y_{t_i}^{22}}& 0 \\[0.3em]
0 & 0 & \frac{1}{Y_{t_i}^{33}}
\end{bmatrix} %
,
\end{equation*}
and
\begin{equation*}
Z_{t_{i}}= %
\begin{bmatrix}
Z_{t_{i}}^1 \\[0.3em]
0 \\[0.3em]
0
\end{bmatrix} %
.
\end{equation*}
Then, substituting these two equation into $B_{t_i}$, one obtains
\begin{equation}
\label{betaVega} B_{t_i}= %
\begin{bmatrix}
\frac{Z_{t_i}^1}{Y_{t_i}^{11}} \\[0.3em]
0 \\[0.3em]
0
\end{bmatrix} %
.
\end{equation}
On the other hand, from equation (\ref{vega:Zt}) it is known that
$Z_{t_i}$ satisfies the following dynamics
\begin{align*}
\begin{bmatrix}
dZ_t^1 \\[0.3em]
dZ_t^2 \\[0.3em]
dZ_t^3
\end{bmatrix} %
&= %
\begin{bmatrix}
r_t & 0 & S_t \\[0.3em]
0& u^{\prime} (V_t )& 0 \\[0.3em]
0& 0 & f^{\prime} (r_t )
\end{bmatrix}
\begin{bmatrix}
Z_t^1 \\[0.3em]
0\\[0.3em]
0
\end{bmatrix} %
dt
\\
&\quad+ %
\begin{bmatrix}
\sigma (V_t ) & S_t\sigma^{\prime} (V_t )& 0\\[0.3em]
0 & \rho_{12} v^{\prime} (V_t ) & 0 \\[0.3em]
0 & 0 & \rho_{13}g^{\prime} (r_t )
\end{bmatrix} %
\begin{bmatrix}
Z_t^1 \\[0.3em]
0\\[0.3em]
0
\end{bmatrix}
dW_t^1
\\
&\quad+ %
\begin{bmatrix}
0 & 0 & 0\\[0.3em]
0 & \mu_{1} v^{\prime} (V_t ) & 0 \\[0.3em]
0 & 0 & \mu_2 g^{\prime} (r_t )
\end{bmatrix} %
\begin{bmatrix}
Z_t^1 \\[0.3em]
0\\[0.3em]
0
\end{bmatrix}
dW_t^2\\
&\quad + %
\begin{bmatrix}
0 & 0& 0\\[0.3em]
0 & 0 & 0 \\[0.3em]
0 & 0 & \mu_3 g^{\prime} (r_t )
\end{bmatrix}
\begin{bmatrix}
Z_t^1 \\[0.3em]
0\\[0.3em]
0
\end{bmatrix} %
dW_t^3
+ %
\begin{bmatrix}
S_t & 0& 0\\[0.3em]
0 & 0 & 0 \\[0.3em]
0 & 0 & 0
\end{bmatrix} %
d\newmathds{W}_t,
\end{align*}
for $ t \in[0,T]$. From this setting, one can write
\begin{equation*}
dZ_{t_i}^1=r_{t_i} Z_{t_i}^1
dt+ \sigma (V_{t_i} )Z_{t_i}^1 dW_{t_i}^1+S_{t_i}
dW_{t_i}^1.
\end{equation*}
With the It\^{o} formula, one can easily find the solution as
\begin{equation*}
Z_{t_i}^{1}=S_{t_i} \Biggl(W_{t_i}^1-
\int_0^{t_i}{\sigma (V_s )ds} \Biggr).
\end{equation*}
Thus, using equation (\ref{betaVega}) and Remark \ref{rmrk:delta}, one has
\begin{equation*}
B_{t_i}^1=\frac{ S_{t_i} (W_{t_i}^1-\int_0^{t_i}{\sigma
(V_s )ds} )} {Y_{t_i}^{11}}= S_0
\Biggl(W_{t_i}^1-\int_0^{{t_i}}{
\sigma (V_s )ds} \Biggr),
\end{equation*}
for $t_i \in[0,T]$. According to Proposition~\ref{prp:genvega}, the
Skorohod integral $\delta(a^{-1} (X)Y\tilde{B})$ remains to be
calculated. Here, $B_{\cdot}$ is Malliavin differentiable and its
Malliavin derivative is
\begin{align*}
D_t B_{t_i}^1&= S_0 \Biggl(
(1,0,0 )-\int_{0}^{t_i}{D_t \sigma
(V_s )ds} \Biggr),
\\
&= S_0 \Biggl( (1,0,0 )- \int_{0}^{t_i}{
\sigma^{\prime} (V_s ) D_t{V_s} \, ds}
\Biggr)
\\
&= S_0 \Biggl( (1,0,0 )- \int_{0}^{t_i}{
\sigma^{\prime} (V_s ) \biggl(\rho_{12}
v(V_t) \frac{Y_s^{22}}{Y_t^{22}}, \mu_1 v(V_t)
\frac{Y_{s}^{22}}{{Y_t}^{22}}, 0\biggr) ds} \Biggr).
\end{align*}
Then, one obtains the trace
\begin{equation*}
Tr \bigl( (D_t B_{t_i} )a^{-1}(X_t)Y_{t}
\bigr)=\frac{1}{\sigma
 (V_{t} )}.
\end{equation*}
As a result, by choosing $\tilde{\alpha}= \frac{1}{t_i-t_{i-1}}$,
\begin{align*}
\delta\bigl(a^{-1} (X) Y \tilde{B}\bigr)&= \sum
_{i=1}^{n} \frac{1}{t_i-t_{i-1}} \Biggl\{ \Biggl(
\bigl(W_{t_i}^1 - W_{t_{i-1}}^1\bigr) - \int
_{t_{i-1}}^{t_i} \sigma (V_t) dt \Biggr)
\\
&\quad\times \Biggl( \int_{t_{i-1}}^{t_i} \frac{1} {\sigma(V_t)}
dW_t^1 - \frac
{\rho_{12}}{\mu_1} \int_{t_{i-1}}^{t_i}
\frac{1}{ \sigma(V_t)} dW_t^2
\\
&\qquad+ \frac{ \rho_{12} \mu_2 -\rho_{13} \mu_1 }{\mu_1 \mu_3} \int_{t_{i-1}}^{t_i}
\frac{1}{ \sigma(V_t)} dW_t^3 \Biggr) - \int_{t_{i-1}}^{t_i} {\frac{1} {\sigma(V_t)} dt } \Biggr
\}.\qedhere
\end{align*}
\end{proof}
\begin{remark}
If the option payoff depends on only the maturity $T$, then
\begin{align*}
\mathit{Vega}^P&=\newmathds{E} \bigl[\varPhi (X_{T} )
\mathit{Vega}_{\mathit{MW}} \bigr],
\end{align*}
where
\begin{align*}
\mathit{Vega}_{\mathit{MW}} &= \frac{e^{-\int_0^T{r_s ds}}} {T} \Biggl\{ \Biggl(W_T^1-
\int_{0}^{T}{\sigma (V_t )dt} \Biggr)
\\
&\quad\times \Biggl( \int_{0}^{T} \frac{1} {\sigma(V_t)}
dW_t^1 - \frac{\rho_{12}}{\mu_1} \int_{0}^{T}
\frac{1}{ \sigma(V_t)} dW_t^2
\\
&\qquad + \frac{ \rho_{12} \mu_2 - \rho_{13} \mu_1 }{\mu_1 \mu_3} \int_{0}^{T}
\frac{1}{ \sigma(V_t)} dW_t^3 \Biggr) - \int
_{0}^{T}{\frac
{1}{\sigma (V_t )}dt} \Biggr\}.
\end{align*}
\end{remark}

\section{Numerical illustration}
\label{sec:num}
This section is devoted to numerical illustrations of the Greeks of a
European call option with a strike price $K$ and an option payoff
function $\varPhi=\max \{S_T-K, 0 \}$, where $S_T$ is the price
of the underlying asset at maturity $T<\infty$.

It is also worth to emphasize that a European call option has a
Lipschitz payoff function and it belongs to the space of locally
integrable functions denoted by $L^2$. Further, the space $C_c^\infty$
of infinitely differentiable functions having a bounded compact
support, where $c$ is an arbitrary compact subset of $\mathbb{R}^3$, is
dense in $L^2$. Hence, there exists a sequence of functions $\varPhi_n
\subset C_c^\infty$ that converges to the main payoff function $\varPhi
\in L^2$,~see~\ref{prp:genchain}. Therefore, one can apply the formulas
introduced in the previous section to European options.

The formulas introduced in Section~\ref{sec:MCgreeks} are for a general
case since the functions in the SDEs (\ref{eq:general1})--(\ref
{eq:general3}) are given with closed forms. Therefore, it is possible
to find the Greeks for all stochastic volatility models through the
special choice of functions that are introduced in SDEs (\ref
{eq:general1})--(\ref{eq:general3}). In this study, these functions are
chosen according to the well-known Heston stochastic volatility model
with a stochastic interest rate, namely the Vasicek model for the
simulation purposes. Under these special choices, the SDEs (\ref
{eq:general1})--(\ref{eq:general3}) become
\begin{align}
dS_t&=r_{t}S_{t}+S_{t}
\sqrt{V_t}dW_{t}^{1},
\\
dV_t&=\kappa (\theta-V_t )dt+\sigma
\sqrt{V_t} \bigl(\rho _{12}\;dW_{t}^{1}+
\mu_1\;dW_{t}^{2} \bigr),
\\
dr_t&=a (b-r_t )dt+k \bigl(\rho_{13}dW_t^1+
\mu_2dW_t^2+\mu _3dW_{t}^{3}
\bigr),
\end{align}
where the initial values are $S_0, V_0$ and $r_0$, respectively. Here,
it is assumed that the coefficients $\kappa, \theta,\sigma,a,b$, and
$k$ are all positive numbers.

Under this setting, the functions $\beta$ and $a$ in (\ref
{eq:multiSDE}) are continuously differentiable, and satisfy the
Lipschitz condition. Moreover, $a$ satisfies the uniform ellipticity
condition (\ref{eqn:ellipticity}). These assumptions are enough to
compute the Greeks in the BSM framework. However, one needs more
assumptions in the Heston stochastic volatility model framework because
the square root function is not differentiable and not globally
Lipschitz. The Novikov condition for the Heston stochastic volatility
model, $\kappa\theta\geq\sigma^2$, guarantees that the volatility
process is always positive. Hence, it is assumed that the Novikov
condition is satisfied, and the initial volatility $V_0$ is positive.
Moreover, in the studies of~\cite{Alos2008,ewald2009malliavin}, it is
proved that the Heston stochastic volatility model is Malliavin
differentiable under the Novikov condition. In the rest of the paper,
it is assumed that the Heston stochastic volatility model satisfies the
Novikov condition and it is Malliavin differentiable.

\subsection{Figures}

In the numerical applications, the model parameters are set as follows:
$\rho_{12}=-0.8$, $\rho_{13} = 0.5$, $\rho_{23}=0.02$, $K=100$,
$S_0=100$, $V_0=0.04$, $R_0=0.02$, $\kappa=2$, $\theta=0.04$, $b=0.08$,
$a=0.02$ $k=0.002$ $r=0.05$ $\sigma=0.04$ $T=1$. The number of
simulations, i.e. the number of paths, is set as 10000, although
convergence of our approach is already pretty good even if it is set as
only 250. Finally, the numbers of discretization steps are set as 252
considering the trading days in a year.

Using the above given values, the study presents simulations for the
Greeks Delta, Rho and Vega of a European call option after conducting
successfully the computations in the previous section. Figures~\ref
{fig:MdeltaBlack},~\ref{fig:MRhoBlack} and~\ref{fig:MVegaBlack} allow
comparing the finite difference method in all variations and the
Malliavin calculus on sample sizes. The computed Delta, Rho and Vega
values of both methods are very stable and quite good, even for a low
number of MC simulations. Furthermore, if the number of simulation
increases, the Greeks values become more stable for each method.
Therefore, one should increase the number of simulations for both
methods to have a more accurate value for the Greeks.

\begin{figure}[t!]
\includegraphics{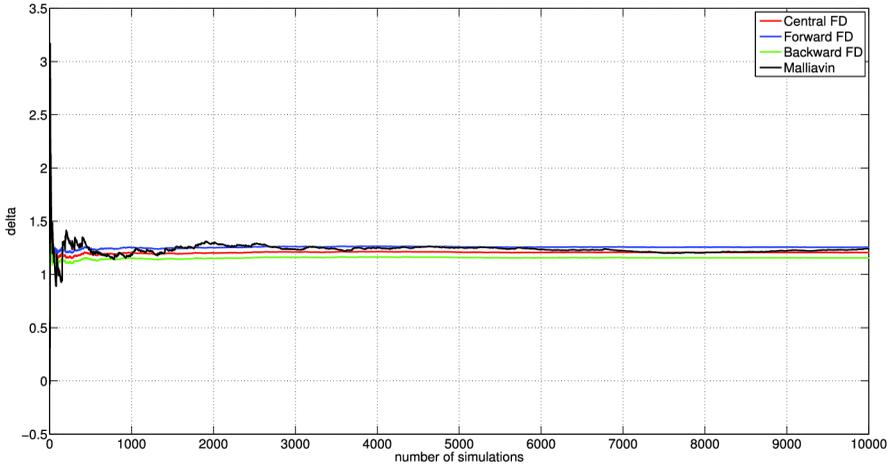}
\caption{Delta of a European call option in the SHV model with the
Malliavin calculus and the finite difference method in all variations}
\label{fig:MdeltaBlack}
\end{figure}

\begin{figure}[t!]
\includegraphics{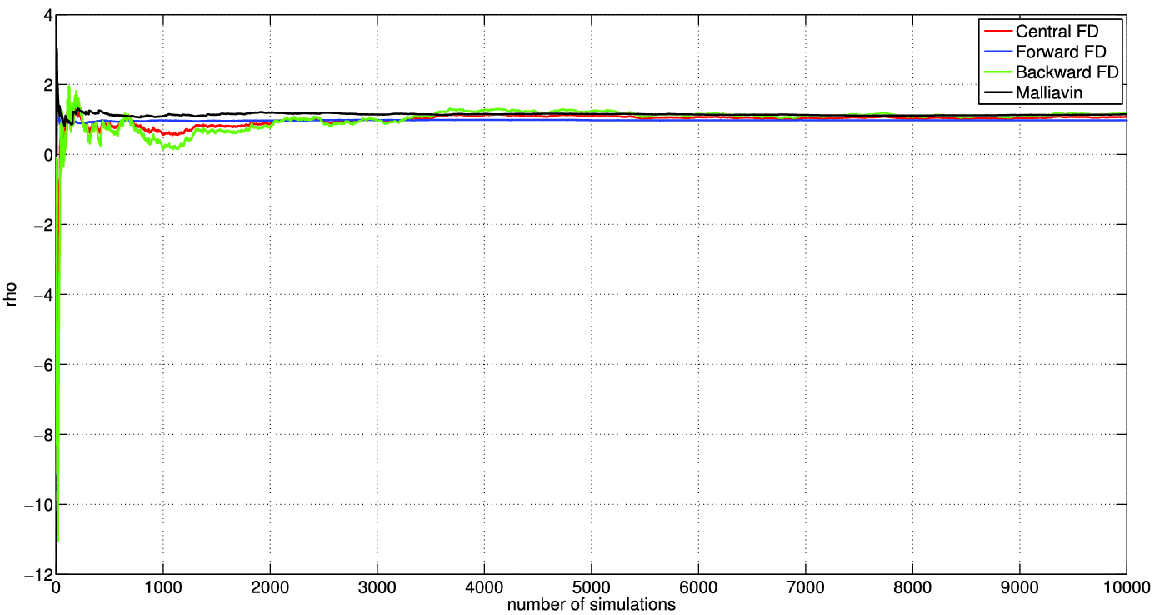}
\caption{Rho of a European call option in the SHV model with the
Malliavin calculus and the finite difference method in all variations}
\label{fig:MRhoBlack}
\end{figure}

\begin{figure}[t!]
\includegraphics{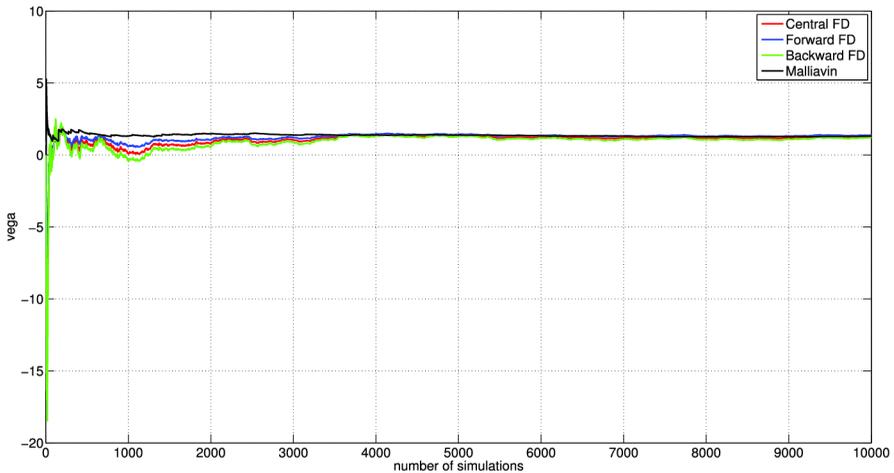}
\caption{Vega of a European call option in the SHV model with the
Malliavin calculus and the finite difference method in all variations}
\label{fig:MVegaBlack}
\end{figure}

\section{Conclusion}
\label{sec:conclusion}
In this study, a very general model skeleton is considered in equations
(\ref{eq:general1}), (\ref{eq:general2}), and (\ref{eq:general3}) for
the computation purposes, and the formulas with detailed proofs for the
well-known Greeks under the assumptions of stochastic hybrid volatility
models are derived. As in many other studies, the Greeks are obtained
as an expectation of product of two terms: a payoff function and a
weight called the Malliavin weight which is independent of the payoff
function. This result indicates that the efficiency of the Malliavin
calculus in the computation of the Greeks does not depend on the type
of the payoff function. In the computations, it is assumed that the
payoff function is continuously differentiable. However, in the case of
mathematical finance, payoff functions are not globally differentiable,
so it is better to mention that explicit expressions given in this
paper are easily extended to payoff functions which are not
continuously differentiable. Hence, once the formulas are obtained for
one option, one can use the same formula for all kinds of options by
changing only the payoff function. Moreover, by substituting the
necessary functions into the Greeks formulas, one can obtain the Greeks
for all kinds of models, and these formulas can be easily adapted to
the special needs of financial engineers working in practice on the
computation of Greeks. As an application of the results, the paper also
examines a particular case of the Heston stochastic volatility model by
assuming the interest rate evolving from the Vasicek model. In order to
compare the results, these Greeks are computed with the finite
difference method in all variations. It is observed that the formulas
which are obtained by using the Malliavin calculus yield results that
require a fewer number of simulations than in the finite difference
method for Vega and Rho. Moreover, despite the easy implementation of
the finite difference method, the duration of computation is higher
than in the Malliavin calculus. Since traders need the Greeks for
hedging purposes, they have to be computed as fast as possible. It is,
therefore, using the Mallaivin calculus is superior to the finite
difference method in the computation of Greeks because once the
formulas are obtained, they can be used for all types of option and the
duration of computation is shorter than in the finite difference method.

\begin{appendix}
\section{A brief review on Malliavin calculus}
\label{sec:preliminaries}

\begin{definition}
\label{def:deriv}
Let $ F=f (W(h_1),\ldots,W(h_n) ) \in\mathcal{S}$ with
$H=L^2([0,T], \mathcal{B}, \mu)$. Then the derivative $D:\mathcal
{S}\mapsto L^2 (\varOmega\times [0, T ] )$ of $F$ is
defined by
\begin{equation*}
DF=\sum_{i=1}^{n}{\frac{\partial}{\partial x_i}f
\bigl(W(h_1),\ldots, W(h_n) \bigr)h_i},
\end{equation*}
where $\frac{\partial f} {\partial x_i}$ is the partial derivative of
$f$ with respect to its $i$th variable.
\end{definition}
\begin{proposition}[Chain Rule](Proposition 1.2.3 in~\cite
{nualart2006malliavin})
\label{prp:chain}
Suppose that $F\,{=}\,(F_1,\ldots,\break F_n)$ is a random vector whose components
belong to the closure of $\mathcal{S}$, $\mathbb{D}^{1,2}$, and the
function $\varphi:R^n\mapsto R$ is a continuously differentiable
function with bounded partial derivatives. Then $\varphi(F)\in\mathbb
{D}^{1,2}$and
\begin{equation*}
D_t \varphi(F)=\sum_{i=1}^{n}{
\frac{\partial}{\partial x_i}\varphi(F)\; D_tF_i}= \bigl\langle\nabla
\varphi(F),DF \bigr\rangle,
\end{equation*}
almost surely for $t \in[0,T]$.
\end{proposition}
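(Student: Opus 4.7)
The plan is to establish the chain rule in two stages: first verify it directly on the smooth cylindrical class $\mathcal{S}$ by reducing to the classical chain rule in $\mathbb{R}^m$, and then extend to $\mathbb{D}^{1,2}$ by a density/closability argument. Concretely, suppose first that each component $F_i \in \mathcal{S}$. By enlarging the index set if necessary, I may represent all components simultaneously as $F_i = f_i(W(h_1),\ldots,W(h_m))$ with a common set $h_1,\ldots,h_m \in H$, so that $F = g(W(h_1),\ldots,W(h_m))$ where $g = (f_1,\ldots,f_n)$ is smooth. Then $\varphi(F) = (\varphi\circ g)(W(h_1),\ldots,W(h_m))$ is again a smooth cylindrical functional, and Definition~\ref{def:deriv} combined with the classical chain rule in $\mathbb{R}^m$ yields
\begin{equation*}
D_t \varphi(F) = \sum_{j=1}^m \frac{\partial(\varphi\circ g)}{\partial x_j}(W(h_\cdot)) h_j(t) = \sum_{i=1}^n \frac{\partial \varphi}{\partial x_i}(F) \sum_{j=1}^m \frac{\partial g_i}{\partial x_j}(W(h_\cdot)) h_j(t) = \sum_{i=1}^n \frac{\partial \varphi}{\partial x_i}(F)\, D_t F_i,
\end{equation*}
which is the identity on $\mathcal{S}$.

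Next I would extend to $F_i \in \mathbb{D}^{1,2}$ by approximation. By definition of $\mathbb{D}^{1,2}$ as the closure of $\mathcal{S}$ under the graph norm of $D$, for each $i$ I can pick a sequence $F_i^{(k)} \in \mathcal{S}$ with $F_i^{(k)} \to F_i$ in $L^2(\varOmega)$ and $DF_i^{(k)} \to DF_i$ in $L^2(\varOmega\times[0,T])$. Since $\nabla\varphi$ is bounded by some constant $M$, $\varphi$ is globally Lipschitz with linear growth, so $\varphi(F^{(k)}) \to \varphi(F)$ in $L^2(\varOmega)$ (the bound $|\varphi(F^{(k)})| \leq |\varphi(0)| + M|F^{(k)}|$ combined with $L^2$-convergence of $F^{(k)}$ supplies uniform integrability). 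By the first stage,
\begin{equation*}
D_t \varphi\bigl(F^{(k)}\bigr) = \sum_{i=1}^n \frac{\partial \varphi}{\partial x_i}\bigl(F^{(k)}\bigr)\, D_t F_i^{(k)}.
\end{equation*}

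The main technical step, and the part I expect to be the real obstacle, is showing that this sequence converges in $L^2(\varOmega\times[0,T])$ to $\sum_i \partial_i\varphi(F)\, D_t F_i$. I would split the difference as
\begin{equation*}
\sum_{i=1}^n \Bigl[\partial_i\varphi\bigl(F^{(k)}\bigr) - \partial_i\varphi(F)\Bigr] D_t F_i^{(k)} + \sum_{i=1}^n \partial_i\varphi(F)\bigl[D_t F_i^{(k)} - D_t F_i\bigr].
\end{equation*}
The second sum tends to zero in $L^2(\varOmega\times[0,T])$ immediately because $|\partial_i\varphi(F)| \leq M$ and $DF_i^{(k)} \to DF_i$ in $L^2$. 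For the first sum, passing to a subsequence gives $F^{(k)} \to F$ a.s.; continuity of $\partial_i\varphi$ then gives pointwise convergence of the bracket to zero, while the bracket remains bounded by $2M$. Dominated convergence applied to the product with $D_t F_i^{(k)}$ (whose $L^2$-norm is uniformly controlled) yields the desired $L^2$-convergence.

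Finally, because the Malliavin derivative operator $D$ is closable on $L^2(\varOmega)$ with values in $L^2(\varOmega\times[0,T])$, the convergence $\varphi(F^{(k)}) \to \varphi(F)$ in $L^2(\varOmega)$ together with the convergence of $D\varphi(F^{(k)})$ in $L^2(\varOmega\times[0,T])$ forces $\varphi(F) \in \mathbb{D}^{1,2}$ with $D\varphi(F)$ equal to the limit, giving the claimed formula almost surely for $t \in [0,T]$. The only delicate point is the uniform-integrability / dominated-convergence step that controls the cross product in the first sum above; the bounded-derivative hypothesis on $\varphi$ is exactly what makes that argument go through and is what prevents the result from holding for a merely $C^1$ function $\varphi$.
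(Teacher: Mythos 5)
Your overall architecture (verify the identity on cylindrical functionals, then extend to $\mathbb{D}^{1,2}$ by density and closability of $D$) is the same as the paper's, and your limiting argument --- splitting the error into a term controlled by the bound on $\nabla\varphi$ and a term handled by a.s.\ convergence along a subsequence plus dominated convergence --- is essentially the paper's three-term triangle inequality, written out in more detail. The gap is in your first stage. You assert that $\varphi(F)=(\varphi\circ g)(W(h_1),\ldots,W(h_m))$ is ``again a smooth cylindrical functional,'' but $\varphi$ is only assumed continuously differentiable, so $\varphi\circ g$ is only $C^1$ and the composite does \emph{not} belong to $\mathcal{S}$, which consists of functionals $f(W(h_1),\ldots,W(h_m))$ with $f$ smooth. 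Definition~\ref{def:deriv} therefore cannot be applied to $\varphi(F)$ directly, and the classical chain-rule computation you perform is not licensed by the definition of $D$. This is precisely the point of the proposition: for smooth $\varphi$ the statement is immediate from the definition, and the entire content is the passage to $C^1$ functions.

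The paper (following Nualart) closes this gap by mollifying $\varphi$ rather than only approximating $F$: one forms $\varphi*\rho_\epsilon$ with a standard mollifier $\rho_\epsilon$, so that $\varphi*\rho_\epsilon$ is $C^\infty$ and $(\varphi*\rho_\epsilon)(F_n)\in\mathcal{S}$ for cylindrical $F_n$; the chain rule then holds by definition, the gradients $\nabla(\varphi*\rho_\epsilon)$ converge to $\nabla\varphi$ pointwise with the same uniform bound, and one passes to the joint limit in $\epsilon$ and $n$ using the closedness of $D$. You should insert this mollification step before your Stage 1 (or, equivalently, first prove that the definition of $D$ extends consistently to $C^1$ cylindrical functionals with bounded derivatives, which itself requires the same mollification). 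With that addition, the remainder of your argument goes through as written.
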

%
%
\begin{proof}
If the function $\varphi$ is smooth the proof can be obtained by the
chain rule in classical analysis. Otherwise, the function has to be
mollified. In order to mollify $\varphi$, one can use $\rho_\epsilon
(x)=\epsilon^n\rho(\epsilon x)$, where $\rho(x)=ce^{\frac{1}{x^2-1}}$
and $c$ is a chosen coefficient that makes the integral $\int_{\mathbb
{R}^n}\rho(x)dx=1$, to obtain a smooth approximation $\varphi*\rho
_\epsilon$. Considering the smooth approximations $F_n$ of $F$ one
obtains $\varphi*\rho_\epsilon(F_n)\mapsto\varphi(F)$ for $\min{\epsilon
, n}\mapsto\infty$ in the space $L^2$. Then by closedness of the
derivative operator $D$\vadjust{\goodbreak}
\begin{align*}
& \Biggl\llVert D\varphi(F)-\sum_{i=1}^{n}{
\frac{\partial}{\partial x_i}\varphi (F)DF_{ni}} \Biggr\rrVert _2
\\
&\quad\leq \bigl\llVert D\varphi(F)-D\varphi*\rho_\epsilon(F) \bigr\rrVert
_2
\\
&\qquad+ \Biggl\llVert D\varphi*\rho_\epsilon(F)-\sum
_{i=1}^{n}{\frac{\partial
}{\partial x_i}\varphi*
\rho_\epsilon(F_n)DF_{ni}} \Biggr\rrVert
_2
\\
&\qquad+\Biggl\llVert \sum_{i=1}^{n}{
\frac{\partial}{\partial x_i}\varphi*\rho _\epsilon}(F_n)DF_{ni}-
\sum_{i_1}^{n}{\frac{\partial}{\partial}\varphi
(F)DF_i} \Biggr\rrVert _2\mapsto0.\qedhere
\end{align*}
\end{proof}
%
%
\begin{lemma}
\label{lemm:cnv}
Suppose that the sequence $F_n\in\mathbb{D}^{1,2}$ converges to $F$ in
the space $L^2 (\varOmega,\mathcal{F},\mathbb{P} )$ satisfying
$\sup\limits_{n}\newmathds{E} [ \llVert DF \rrVert _H^2 ]<\infty$.
Then, $F\in\mathbb{D}^{1,2}$ and $DF_n$ weakly converges to $DF$ in
$L^2 (\varOmega\times [0, T ] )$.
\end{lemma}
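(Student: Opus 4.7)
The plan is to exploit that $\mathbb{D}^{1,2}$, equipped with the graph norm $\|F\|_{1,2}^2 := \newmathds{E}[F^2] + \newmathds{E}[\|DF\|_H^2]$, is itself a Hilbert space. Closability of $D$ on $L^2(\Omega)$ (which is standard for the Malliavin derivative) means that $F \mapsto (F, DF)$ embeds $\mathbb{D}^{1,2}$ isometrically as a closed subspace of the product Hilbert space $L^2(\Omega) \times L^2(\Omega \times [0,T])$; consequently $\mathbb{D}^{1,2}$ is reflexive and Banach--Alaoglu applies.

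First I would observe that the two hypotheses together force $\sup_n \|F_n\|_{1,2} < \infty$: $L^2(\Omega)$-convergence of $F_n$ implies boundedness of $\newmathds{E}[F_n^2]$, while $\sup_n \newmathds{E}[\|DF_n\|_H^2]<\infty$ is given by assumption. Next, by weak compactness in the reflexive space $\mathbb{D}^{1,2}$ I would extract a subsequence $F_{n_k}$ converging weakly in $\mathbb{D}^{1,2}$ to some $G \in \mathbb{D}^{1,2}$. Via the graph embedding, this weak convergence is equivalent to the joint statement $F_{n_k} \rightharpoonup G$ in $L^2(\Omega)$ and $DF_{n_k} \rightharpoonup DG$ in $L^2(\Omega\times[0,T])$. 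Since $F_n \to F$ strongly (and hence weakly) in $L^2(\Omega)$, uniqueness of weak limits forces $G = F$, which already yields $F \in \mathbb{D}^{1,2}$ together with $DF_{n_k}\rightharpoonup DF$.

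To upgrade from this subsequence to the full sequence I would apply a standard subsequence-of-subsequence argument: any subsequence of $\{DF_n\}$ inherits the same uniform bound, so by repeating the extraction it admits a further sub-subsequence weakly convergent to some limit $\eta$, which by the preceding argument must equal $DF$; this forces the entire sequence $\{DF_n\}$ to converge weakly to $DF$ in $L^2(\Omega\times[0,T])$.

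The main obstacle — and the only place where Malliavin-specific machinery enters — is the equivalence invoked in the second paragraph: weak convergence in the graph-norm Hilbert space $\mathbb{D}^{1,2}$ must coincide with simultaneous weak convergence of values in $L^2(\Omega)$ and of Malliavin derivatives in $L^2(\Omega\times[0,T])$, with the derivative of the limit being the limit of the derivatives. This is a direct consequence of closability, but to be rigorous one should either invoke the isometric graph embedding explicitly, or verify it dually: for $u \in \mathrm{Dom}(\delta)$ the identity $\newmathds{E}[\langle DF_n, u\rangle_H] = \newmathds{E}[F_n \delta(u)]$ lets one pass to the limit on both sides (weak limit on the left, strong limit on the right) and conclude that the weak limit of $DF_{n_k}$ is the Malliavin derivative of $F$.
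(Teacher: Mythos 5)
Your proof is correct. The paper does not actually prove this lemma: it is stated without proof (it is Lemma~1.2.3 in Nualart's monograph, cited in the bibliography), and your argument --- uniform boundedness in the graph norm, weak sequential compactness of the Hilbert space $\mathbb{D}^{1,2}$, identification of the weak limit through the weakly closed graph of the closable operator $D$ (equivalently, by duality against $\delta(u)$ for $u\in \mathrm{Dom}(\delta)$), and the subsequence-of-subsequences step to upgrade to the full sequence --- is exactly the standard proof of that result. One minor remark: the hypothesis as printed reads $\sup_n \newmathds{E}[\Vert DF\Vert_H^2]<\infty$, which is evidently a typo for $\sup_n \newmathds{E}[\Vert DF_n\Vert_H^2]<\infty$; you read it that way, correctly, since the bound is otherwise vacuous as a hypothesis on the sequence.
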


A generalization of Proposition~\ref{prp:chain} to the functions even
not necessarily differentiable is given below.
%
\begin{proposition}(Proposition 1.2.4 in~\cite{nualart2006malliavin})
\label{prp:genchain}
Given a function $\varphi$ that satisfies, for a positive constant $K\in
\mathbb{R}$,
\begin{equation*}
\bigl\llvert \varphi(x)-\varphi(y) \bigr\rrvert \leq K \llvert x-y \rrvert ,
\quad x,y\in \mathbb{R}^n
\end{equation*}
and $F\in\mathbb{D}^{1,2}$. Then $\varphi(F)\in\mathbb{D}^{1,2}$ and
there exists an n-dimensional random vector $G\in\mathbb{R}^n$, $ \llvert G \rrvert <K$ such that
\begin{equation*}
D\bigl(\varphi(F)\bigr)=\sum_{i=1}^{n}{G_iDF_i}.
\end{equation*}
\end{proposition}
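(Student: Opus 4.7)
The plan is to lift the smooth chain rule of Proposition~\ref{prp:chain} to Lipschitz $\varphi$ via mollification, then extract the coefficient vector $G$ as a weak limit of the gradients of the mollified functions. First I would smooth $\varphi$ using the same mollifier as in the proof of Proposition~\ref{prp:chain}: set $\varphi_n = \varphi * \rho_{1/n}$ with $\rho_\epsilon(x)=\epsilon^n\rho(\epsilon x)$. Each $\varphi_n\in C^\infty(\mathbb{R}^n)$ is still $K$-Lipschitz, hence $\|\nabla\varphi_n\|_\infty\leq K$, and $\varphi_n\to\varphi$ uniformly on $\mathbb{R}^n$.

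Next I would apply Proposition~\ref{prp:chain} to each smooth approximant: $\varphi_n(F)\in\mathbb{D}^{1,2}$ with
\begin{equation*}
D\varphi_n(F)=\sum_{i=1}^{n}\frac{\partial\varphi_n}{\partial x_i}(F)\,DF_i.
\end{equation*}
From the Lipschitz bound one gets $|\nabla\varphi_n(F)|\leq K$ a.s., and therefore $\|D\varphi_n(F)\|_H\leq K\|DF\|_H$, so
\begin{equation*}
\sup_n\newmathds{E}\bigl[\|D\varphi_n(F)\|_H^2\bigr]\leq K^2\,\newmathds{E}\bigl[\|DF\|_H^2\bigr]<\infty.
\end{equation*}
The pointwise bound $|\varphi_n(F)|\leq|\varphi(0)|+K|F|+o(1)$ together with the uniform convergence $\varphi_n\to\varphi$ yields $\varphi_n(F)\to\varphi(F)$ in $L^2(\varOmega,\mathcal{F},\mathbb{P})$ by dominated convergence. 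Lemma~\ref{lemm:cnv} then delivers $\varphi(F)\in\mathbb{D}^{1,2}$ and the weak convergence $D\varphi_n(F)\rightharpoonup D(\varphi(F))$ in $L^2(\varOmega\times[0,T])$.

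Finally I would construct $G$. The random vectors $\nabla\varphi_n(F)$ are uniformly bounded by $K$ in $L^\infty(\varOmega;\mathbb{R}^n)$, so by Banach--Alaoglu (or by extracting a weakly convergent subsequence in $L^2$) one may pass to a subsequence $\nabla\varphi_{n_k}(F)\rightharpoonup G$ in $L^2(\varOmega;\mathbb{R}^n)$, and the weak limit satisfies $|G|\leq K$ a.s. Since $DF_i\in L^2(\varOmega\times[0,T])$ is a fixed element and the coefficients $\partial_i\varphi_{n_k}(F)$ are uniformly bounded, one can pass to the limit in the bilinear pairing $\partial_i\varphi_{n_k}(F)\cdot DF_i$, and comparing with the weak limit of $D\varphi_n(F)$ obtained above forces $D(\varphi(F))=\sum_{i=1}^n G_i DF_i$.

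The most delicate point will be justifying the limit in the product $\partial_i\varphi_{n_k}(F)\,DF_i$. Only the whole sequence $D\varphi_n(F)$ converges weakly, whereas $\nabla\varphi_n(F)$ need not converge pointwise at all, so $G$ is genuinely an abstract weak-$L^2$ limit rather than $\nabla\varphi(F)$ in any classical sense. The multiplication step must therefore be handled by combining weak convergence in $L^2(\varOmega)$ of the bounded factor with strong $L^2$ membership of the fixed factor $DF_i$, which suffices to identify the limit of the product in the duality with bounded test random variables; any apparent dependence on the subsequence is reconciled because the total sum $\sum_i G_i DF_i$ is forced to equal the unambiguous limit $D(\varphi(F))$ given by Lemma~\ref{lemm:cnv}.
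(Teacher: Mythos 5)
Your proof is correct and follows essentially the same route as the paper's: mollify $\varphi$, use the uniform Lipschitz bound $\llvert \nabla(\varphi*\rho_\epsilon)\rrvert\leq K$ to verify the hypotheses of Lemma~\ref{lemm:cnv}, and pass to the weak limit in the smooth chain rule of Proposition~\ref{prp:chain} to extract $G$. You are in fact more careful than the paper on the one genuinely delicate step, namely identifying the limit of the product $\partial_i\varphi_{n_k}(F)\,DF_i$ from weak convergence of the uniformly bounded gradient factor against the fixed $L^2$ factor $DF_i$, which the paper's proof passes over silently.
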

\begin{proof}
Using the same mollifier $\rho_\epsilon$ as defined in the proof of
Proposition~\ref{prp:chain}, one can obtain $\varphi*\rho_\epsilon$
that converges to $\varphi$. The sequence $D(\varphi*\rho_\epsilon)(F)$
is bounded in the space $L^2 (\varOmega\times [0, T ]
)$. This is because $ \llvert \nabla(\varphi*\rho_\epsilon) \rrvert \leq K$
for some large $\epsilon$. From Lemma~\ref{lemm:cnv}, $\varphi(F)\in
\mathbb{D}^{1,2}$ and the Malliavin derivative $D(\varphi*\rho_\epsilon
)(F)\mapsto D(\varphi(F))$ in the weak sense. On the other hand, $\nabla
(\varphi*\rho_\epsilon)(F)$ converges weakly to a vector $G\in\mathbb
{R}^n$, $ \llvert G \rrvert <K$. So, one can take the the weak limit in
\begin{equation}
D(\varphi*\rho_\epsilon) (F)=\sum_{i=1}^n{
\frac{\partial}{\partial
x_i}\varphi(F)DF_i}
\end{equation}
to lead us to the result.
\end{proof}

\begin{definition}[Skorohod Integral]
Consider $u\in L^2 ([0,T ]\times\varOmega )$. Then,
$u\in Dom(\delta)$ if for all $F\in\mathbb{D}^{1,2}$ and
\begin{equation*}
\Biggl\llvert \newmathds{E} \Biggl[\int_{0}^{T}{D_tF
u_t dt} \Biggr] \Biggr\rrvert \leq c \llVert F \rrVert
_{L^2(\varOmega)},
\end{equation*}
where $c$ is some constant depending on $u$,
\begin{equation*}
\delta(u)=\int_0^T{u_t\delta
W_t}
\end{equation*}
is an element of $L^2(\varOmega)$, and the duality formula holds:
\begin{equation*}
\newmathds{E} \Biggl[\int_0^T{D_tF
u_t dt} \Biggr]=\newmathds{E} \bigl[F\delta (u) \bigr],\quad\forall F\in
\mathbb{D}^{1,2}.
\end{equation*}
\end{definition}
%
%
\begin{proposition}[Integration by Parts Formula]
\label{prp:skorohod}
Suppose $F\in\mathbb{D}^{1,2}$, $F h \in Dom(\delta) $ for $h\in H$. Then
\begin{equation*}
\delta (Fh )=FW(h)- \langle DF, h \rangle_H.
\end{equation*}
Moreover if $F=1$ a.s.,
\begin{equation*}
\delta(h)=W(h).
\end{equation*}
\end{proposition}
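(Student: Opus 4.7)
The plan is to prove the identity via the duality formula that defines the Skorohod integral, combined with the product rule for the Malliavin derivative. First I would recall that by the definition of $\delta$, for every test random variable $G\in\mathbb{D}^{1,2}$ we have
\begin{equation*}
\newmathds{E}\bigl[G\,\delta(Fh)\bigr]=\newmathds{E}\bigl[\langle DG,Fh\rangle_H\bigr]=\newmathds{E}\bigl[F\,\langle DG,h\rangle_H\bigr],
\end{equation*}
where in the second step I pull the scalar $F$ out of the Hilbert inner product with the deterministic element $h\in H$.

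Next I would invoke the product rule for the Malliavin derivative, which is the chain rule (Proposition~\ref{prp:chain}) applied to $\varphi(x,y)=xy$, giving $D(FG)=F\,DG+G\,DF$, hence $F\,DG=D(FG)-G\,DF$. Substituting this into the previous display,
\begin{equation*}
\newmathds{E}\bigl[G\,\delta(Fh)\bigr]=\newmathds{E}\bigl[\langle D(FG),h\rangle_H\bigr]-\newmathds{E}\bigl[G\,\langle DF,h\rangle_H\bigr].
\end{equation*}
Applying the duality formula again to the first term on the right with $FG\in\mathbb{D}^{1,2}$ (for $F,G$ smooth cylindrical, with extension by density and the closedness argument of Lemma~\ref{lemm:cnv}), I obtain $\newmathds{E}[\langle D(FG),h\rangle_H]=\newmathds{E}[FG\,\delta(h)]$. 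Since the identity $\delta(h)=W(h)$ for deterministic $h\in H$ can be verified directly from the duality formula on cylindrical test variables $G=f(W(h_1),\dots,W(h_n))$, this yields
\begin{equation*}
\newmathds{E}\bigl[G\,\delta(Fh)\bigr]=\newmathds{E}\bigl[G\bigl(F\,W(h)-\langle DF,h\rangle_H\bigr)\bigr].
\end{equation*}
Because $\mathbb{D}^{1,2}$ is dense in $L^2(\varOmega)$ and $G$ is arbitrary there, I may conclude $\delta(Fh)=FW(h)-\langle DF,h\rangle_H$ in $L^2(\varOmega)$. The second assertion is immediate: taking $F\equiv 1$ a.s.\ gives $DF=0$, so $\delta(h)=W(h)$.

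The main obstacle is not conceptual but technical: justifying that $FG\in\mathbb{D}^{1,2}$ so that the duality formula can be applied to $D(FG)$. For $F,G$ in the smooth cylindrical class $\mathcal{S}$ the product rule is elementary, but general $F\in\mathbb{D}^{1,2}$ requires an approximation argument, choosing sequences $F_n,G_n\in\mathcal{S}$ converging in $\mathbb{D}^{1,2}$ to $F$ and $G$ respectively, proving the identity for $F_n,G_n$, and then passing to the limit using the closability of $D$ together with the standing hypothesis $Fh\in\mathrm{Dom}(\delta)$ which guarantees that $\delta(Fh)$ is itself a well-defined element of $L^2(\varOmega)$.
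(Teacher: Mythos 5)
Your argument is correct, and it is the classical duality proof of the integration by parts formula (essentially Nualart, \emph{The Malliavin Calculus and Related Topics}, Prop.~1.3.3 specialized to a deterministic direction $h$). The paper itself offers no proof to compare against: Proposition~\ref{prp:skorohod} is stated in the review appendix as standard background, so your derivation fills a gap rather than diverging from an existing route. Two small technical remarks. First, you cannot literally invoke Proposition~\ref{prp:chain} for the product rule, since $\varphi(x,y)=xy$ does not have bounded partial derivatives; the clean fix is the one you gesture at, namely to take the test variable $G$ in the smooth cylindrical class $\mathcal{S}$ with $f$ and its derivatives bounded, for which $FG\in\mathbb{D}^{1,2}$ and $D(FG)=F\,DG+G\,DF$ hold for any $F\in\mathbb{D}^{1,2}$ directly from Definition~\ref{def:deriv} and closability of $D$ --- approximating $F$ alone suffices, and approximating both $F$ and $G$ as you propose is more work than needed. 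Second, in the final density step note that $FW(h)$ is a priori only in $L^1(\varOmega)$, so the identification should be phrased as: two $L^1$ random variables whose expectations against all bounded cylindrical $G$ agree must coincide a.s.; this is harmless but worth saying, since the hypothesis $Fh\in Dom(\delta)$ only guarantees that the left-hand side $\delta(Fh)$ is in $L^2(\varOmega)$. With these adjustments the proof is complete, and the specialization $F\equiv 1$, $DF=0$ giving $\delta(h)=W(h)$ is immediate as you say.
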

%
%
\begin{remark}
Note that, in particular, if $H=L^2([0,T], \mathcal{B}, \mu)$, where
$\mu$ is a $\sigma$-finite atomless measure on a measurable Borel space
$([0,T], \mathcal{B})$, then Proposition \ref{prp:skorohod} turns to
\begin{equation*}
\int_{0}^{T} F h_t \delta
W_t =F \int_{0}^{T} h_t
\delta W_t -\int_{0}^{T}{D_tF
h_t dt}.
\end{equation*}
\end{remark}
%
%
\begin{remark}
\label{rmrk:delta2}
The domain of the Skorohod integral also contains the adapted
stochastic processes in $L^2([0,T ]\times\varOmega)$. When the
integrand is adapted, then the Skorohod integral coincides with the
It\^{o} integral (see \cite{nualart2006malliavin}), i.e.
\begin{equation*}
\delta(h)=\int_{0}^{T}{h_t
dW_t},
\end{equation*}
and
\[
\int_{0}^{T} F h_t dW_t =
F\int_{0}^{T}{h_t dW_t}-
\int_{0}^{T}{D_tF h_t dt},
\]
for $F\in\mathbb{D}^{1, 2}$ and $\newmathds{E} [\int_0^T{(F h_t
)^2dt} ]<\infty$.
\end{remark}
%
%
\begin{proposition}(Proposition 1.3.8 in~\cite{nualart2006malliavin})
Let $u\in\mathbb{L}^{1,2}=\mathbb{D}^{1,2}(L^2(T))$ be a stochastic
process satisfying $\newmathds{E} [\int_0^T{u^2(s,\omega)ds}
]<\infty$ for all $0\leq s\leq T$. Assume that $D_tu\in Dom(\delta)$
for all $0\leq t\leq T$
and $\newmathds{E} [\int_0^T{ (\delta (D_tu )
)^2dt} ]<\infty$. Then, $\delta(u)\in\mathbb{D}^{1, 2}$ and
\begin{equation*}
D_t \bigl(\delta(u) \bigr)=u(t, \omega)+\int_{0}^{T}{D_tu(s,
\omega)dW_s}.
\end{equation*}
Moreover, if $u(t, \omega)$ is an adapted process belonging to $L^2([0,T]\times\varOmega)$, then\querymark{Q2}
\begin{equation*}
D_t \Biggl(\int_{0}^{T}{u(s,
\omega)dW_s} \Biggr)=u(t,\omega)+\int_{t}^{T}{D_tu(s,
\omega)dW_s}.
\end{equation*}
\end{proposition}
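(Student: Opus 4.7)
The plan is to establish the commutation relation $D_t\delta(u) = u(t,\omega) + \delta(D_tu)$ first on a dense class of simple processes and then extend it by closedness using the two integrability hypotheses. The second, adapted‐process identity should then follow as a direct consequence of Remark~\ref{rmrk:delta2} together with the vanishing of Malliavin derivatives at future times for adapted integrands.

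First I would take $u$ of the cylindrical form $u=\sum_{j=1}^N F_j h_j$ with $F_j\in\mathcal{S}$ and $h_j\in H$. The integration-by-parts formula (Proposition~\ref{prp:skorohod}) gives $\delta(u)=\sum_j\bigl(F_jW(h_j)-\langle DF_j,h_j\rangle_H\bigr)$, and Proposition~\ref{prp:chain} (the smooth chain rule) lets me differentiate term by term to obtain
\begin{equation*}
D_t\delta(u)=\sum_j F_j h_j(t)+\sum_j\bigl((D_tF_j)W(h_j)-\bigl\langle D(D_tF_j),h_j\bigr\rangle_H\bigr)=u(t,\omega)+\delta(D_tu),
\end{equation*}
where I have used the symmetry $D_tD=DD_t$ on cylindrical functionals (reflecting equality of mixed partials) and the definition $D_tu=\sum_j(D_tF_j)h_j$. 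This identity is the key computation and encodes the entire statement at the simple-process level.

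Next I would extend to general $u\in\mathbb{L}^{1,2}$ by a density/closedness argument. Choosing simple $u_n\to u$ in $\mathbb{L}^{1,2}$, the $L^2$-continuity of $\delta$ on $\mathbb{L}^{1,2}$ gives $\delta(u_n)\to\delta(u)$ in $L^2(\varOmega)$, while the hypothesis $D_tu\in\mathrm{Dom}(\delta)$ for every $t$, together with $\newmathds{E}[\int_0^T(\delta(D_tu))^2\,dt]<\infty$, ensures that $u_n(t)+\delta(D_tu_n)\to u(t)+\delta(D_tu)$ in $L^2([0,T]\times\varOmega)$. Applying Lemma~\ref{lemm:cnv} to the sequence $\delta(u_n)\in\mathbb{D}^{1,2}$ then yields $\delta(u)\in\mathbb{D}^{1,2}$ with $D_t\delta(u)=u(t,\omega)+\int_0^T D_tu(s,\omega)\,dW_s$. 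For the adapted case, Remark~\ref{rmrk:delta2} identifies $\delta(u)$ with the It\^o integral, and adaptedness forces $D_tu(s,\omega)=0$ for $s<t$ almost surely, so the Skorohod integral $\delta(D_tu)$ reduces to $\int_t^T D_tu(s,\omega)\,dW_s$, producing the second identity.

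The main obstacle is the limit passage in the second term on the right-hand side: one has to arrange that $\delta(D_tu_n)\to\delta(D_tu)$ in $L^2([0,T]\times\varOmega)$, which demands that the approximating sequence $u_n$ be chosen so that $D_tu_n\to D_tu$ in a topology strong enough for the unbounded operator $\delta$ to act continuously uniformly in the parameter $t$. This is precisely where the technical hypothesis $\newmathds{E}[\int_0^T(\delta(D_tu))^2\,dt]<\infty$ is indispensable and explains why the result is not a mere corollary of the cylindrical commutation rule.
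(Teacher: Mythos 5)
The paper does not actually prove this proposition: it is quoted as Proposition 1.3.8 of Nualart's book and used as a black box, so there is no in-paper argument to compare against. Your sketch follows the standard textbook route, namely proving the commutation $D_t\delta(u)=u(t)+\delta(D_tu)$ on smooth elementary processes via Proposition~\ref{prp:skorohod} and the symmetry of the second Malliavin derivative, and then extending by a density/closability argument. The cylindrical computation is correct, and the treatment of the adapted case, using Remark~\ref{rmrk:delta2} together with $D_tu(s)=0$ for $s<t$ (so that $s\mapsto D_tu(s)$ is adapted on $[t,T]$ and its Skorohod integral is an It\^o integral), is also right.

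The extension step, however, is not closed as written, and you have in effect named the gap without filling it. The claim that $u_n(t)+\delta(D_tu_n)\to u(t)+\delta(D_tu)$ in $L^2([0,T]\times\varOmega)$ does not follow from the stated hypotheses: the conditions $D_tu\in \mathrm{Dom}(\delta)$ and $\newmathds{E}[\int_0^T(\delta(D_tu))^2\,dt]<\infty$ concern the limit $u$ only, and since $\delta$ is an unbounded operator, convergence $D_tu_n\to D_tu$ in $L^2([0,T]^2\times\varOmega)$ gives no control on $\delta(D_tu_n)-\delta(D_tu)$. Similarly, an appeal to Lemma~\ref{lemm:cnv} requires the uniform bound $\sup_n\newmathds{E}[\llVert D\delta(u_n)\rrVert_H^2]<\infty$, which you have not derived; it is not automatic for an arbitrary approximating sequence in $\mathbb{L}^{1,2}$. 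The standard repair (and the route Nualart takes) is duality rather than strong convergence: one uses the covariance identity $\newmathds{E}[\delta(u)\delta(w)]=\newmathds{E}[\langle u,w\rangle_H]+\newmathds{E}[\int_0^T\int_0^T D_su(t)D_tw(s)\,ds\,dt]$ with test integrands $w=Fh$, $F\in\mathcal{S}$, $h\in H$, to verify that the candidate process $v(t)=u(t)+\delta(D_tu)$, which lies in $L^2([0,T]\times\varOmega)$ precisely by the hypothesis you call indispensable, satisfies the defining duality of the Malliavin derivative of $\delta(u)$; closability of $D$ then yields $\delta(u)\in\mathbb{D}^{1,2}$ with $D\delta(u)=v$. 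Without this (or an equivalent device), the second paragraph of your argument does not constitute a proof.
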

%
%
\begin{remark}
\label{rmrk:trace}
Suppose $F$ is a d-dimensional random column-vector and $u_t$ is a
$d\times d$ matrix-process. Then Remark~\ref{rmrk:delta} translates to
\begin{equation*}
\delta(Fh)= F*\int_0^T{h_td
W_t}-\int_0^T{Tr(D_tF)h_t}dt,
\end{equation*}
with the convention that the It\^o integral for a matrix process is a
column-vector~\cite{Davis2006}.
\end{remark}
\end{appendix}












\end{document}